\documentclass[onecolumn,journal]{IEEEtran}
\usepackage[table]{xcolor}
\usepackage{tikz}
\usepackage{amsmath,amssymb,amsthm,booktabs,subcaption,arydshln}
\usepackage[inline]{enumitem}
\usepackage{marvosym}
\usepackage{frcursive}
\usepackage[T1]{fontenc}
\usepackage{pgfplots}
\usepackage{soul,mathdots,blkarray,multirow}

\definecolor{DarkGreen}{rgb}{0.1,0.5,0.1}
\definecolor{DarkRed}{rgb}{0.5,0.1,0.1}
\definecolor{DarkBlue}{rgb}{0.1,0.1,0.5}
\usepackage[pdftex]{hyperref}
\hypersetup{
    unicode=false,          %
    pdftoolbar=true,        %
    pdfmenubar=true,        %
    pdffitwindow=false,      %
    pdfnewwindow=true,      %
    colorlinks=true,       %
    linkcolor=DarkBlue,          %
    citecolor=DarkGreen,        %
    filecolor=DarkGreen,      %
    urlcolor=DarkBlue,          %
    pdftitle={},
    pdfauthor={},    
}

\usetikzlibrary{positioning}
\usetikzlibrary{arrows,patterns,calc}

\newtheorem{defn}{Definition}
\newtheorem{lemma}{Lemma}
\newtheorem{thm}{Theorem}
\newtheorem{remark}{Remark}
\newtheorem{corollary}{Corollary}
\newtheorem{construction}{Construction}
\newtheorem{claim}{Claim}
\newtheorem{example}{Example}

\renewcommand{\c}{\ensuremath{\boldsymbol{c}}}

\newcommand{\D}{\ensuremath{\boldsymbol{D}}}

\newcommand{\p}{\ensuremath{\boldsymbol{p}}}
\renewcommand{\P}{\ensuremath{\mathcal{P}}}
\newcommand{\matP}{\ensuremath{\boldsymbol{P}}}
\renewcommand{\u}{\ensuremath{\boldsymbol{u}}}
\newcommand{\x}{\ensuremath{\boldsymbol{x}}}
\newcommand{\y}{\ensuremath{\boldsymbol{y}}}
\newcommand{\Z}{\ensuremath{\mathbb{Z}}}

\newcommand{\Dij}{[\D]_{ij}}

\newcommand{\bintoreal}{\ensuremath{\mathrm{b2r}}}
\newcommand{\ML}{\ensuremath{\mathrm{RV}}}

\newcommand{\ve}[1]{\ensuremath{\boldsymbol{#1}}}
\newcommand{\wt}{\ensuremath{\mathrm{wt}}}
\newcommand{\minmax}{\ensuremath{\mathrm{mm}}}

\newcommand{\Enc}{\ensuremath{\mathsf{Enc}}}
\renewcommand{\Im}{\ensuremath{\mathsf{Im}}}

\newcommand{\I}{E}

\newcommand{\smod}{\ensuremath{~\mathrm{smod}~}}

\def \fcc {FCC}
\def \fccs {FCCs}
\def \FCC {FCC}

\DeclareMathOperator*{\argmax}{arg\,max}
\DeclareMathOperator*{\argmin}{arg\,min}

\newcommand{\PreserveBackslash}[1]{\let\temp=\\#1\let\\=\temp}
\newcolumntype{C}[1]{>{\PreserveBackslash\centering}p{#1}}

\usetikzlibrary{fit}

\usepackage{flushend}

\hyphenation{op-tical net-works semi-conduc-tor}

\begin{document}
\title{Function-Correcting Codes}

\author{%
	\IEEEauthorblockN{Andreas Lenz,
		Rawad Bitar,
		Antonia Wachter-Zeh,
		and Eitan Yaakobi}
	\thanks{AL, RB and AW-Z are with the Institute for Communications Engineering, Technical University of Munich (TUM), Germany. Emails: andreas.lenz@mytum.de, \{rawad.bitar, antonia.wachter-zeh\}@tum.de.}
	\thanks{EY is with the CS Department of Technion --- Israel Institute of Technology, Israel. Email: yaakobi@cs.technion.ac.il.}
	\thanks{This project has received funding from the European Research Council (ERC) under the European Union’s Horizon 2020 research and innovation programme (grant agreement No. 801434), from the United States-Israel BSF grant 2018048, and from the Technical University of Munich - Institute for Advanced Studies, funded by the German Excellence Initiative
		and European Union Seventh Framework Programme under Grant Agreement
		No. 291763.
	}
	
\thanks{This paper was presented in part at ISIT 2021 \cite{lenz_function-correcting_2021}.}
}

\markboth{}%
{Lenz \MakeLowercase{\textit{et al.}}: Function-Correcting Codes}

\maketitle

\begin{abstract}
In this paper we study \emph{function-correcting codes}, a new class of codes designed to protect the function evaluation of a message against errors. We show that \fccs~are equivalent to \emph{irregular-distance codes}, i.e., codes that obey some given distance requirement between each pair of codewords. Using these connections, we study irregular-distance codes and derive general upper and lower bounds on their optimal redundancy. Since these bounds heavily depend on the specific function, we provide simplified, suboptimal bounds that are easier to evaluate. We further employ our general results to specific functions of interest and compare our results to standard error-correcting codes, which protect the whole message.
\end{abstract}

\IEEEpeerreviewmaketitle

\section{Introduction}\label{sec:intro}
In standard communication systems, a sender desires to convey a message to a receiver via an erroneous channel. %
Classically, each part of the message is of equal importance to the receiver and the common goal is to construct an error-correcting code with a suitable decoder such that the whole message can be recovered correctly.
Consider now the scenario where a certain \emph{attribute} of the message, i.e., the result of evaluating a certain function on the message, is of particular interest to the receiver. Assuming that the sender is aware of this function, she can encode the message such that the desired attribute is protected against errors. This paradigm gives rise to a new class of codes, which we call \emph{function-correcting codes} (FCCs). %
In this work we consider FCCs, where the message itself is observed through the channel, followed by redundancy, as illustrated in Fig.~\ref{fig:intro_fcc_problem}. 

\begin{figure}[t]
	\centering
	\begin{tikzpicture}[>=stealth]
		\node (u) {\u};
		\node[right = 0.5cm of u, rectangle, rounded corners, fill=gray!20!, draw, minimum height=0.75cm] (enc) {Encoder};
		\node[above left = 0.65cm and -1.05cm of enc] (alice) {\bfseries\large{Alice}};

		\node[draw,rectangle, rounded corners, fill=gray!20!, right = 2.5 cm of enc.south, minimum height=1.5cm, anchor = south] (channel) {Channel};
		\node[draw,rectangle, rounded corners, fill=gray!20!, right = 1 cm of channel, minimum height=0.75cm] (decoder) {Decoder};
		
		\node[below = 0.5cm of channel] (fu) {$f$};
		
		\node[right = 0.5cm of decoder](fy) {$f(\u)$};
		\node[above right= 0.25cm and -0.5cm of decoder] (bob) {\bfseries\large{Bob}};

		\draw[->] (u) -- (enc);%
		\coordinate[] (cor) at ($(u) + (0.5,0.7)$);
		\draw[-] (u -| cor) -- (cor);
		\draw[->] (cor) -- (cor -| channel.west);
		\draw[->] (enc.east) -- node [above] {$\p$} (enc.east -| channel.west);
		\draw[->] (channel) -- node [above] {$\y$} (decoder);
		\draw[->] (decoder) -- (fy);
		\draw[->, dashed] (fu) -| (enc);
		\draw[->, dashed] (fu) -| (decoder);
	\end{tikzpicture}

	\caption{Illustration of the \fcc\ setup. Alice has a message $\u$, which features an attribute $f(\u)$ that is of special interest to Bob. To guarantee recoverability of this attribute, Alice encodes the message $\u$ to a redundancy vector $\p$. Given an erroneous version $\y$ of the codeword $\c=(\u,\p)$ and the knowledge of the function $f$, Bob can correctly infer $f(\u)$.}
	\label{fig:intro_fcc_problem}
\end{figure}%

Clearly, if the receiver is able to recover the message, it can evaluate the function on the message to obtain the desired attribute.
It is however more efficient to protect only the specific function value of interest, especially when the message is long and the function image is small.
Our generic goal when designing \fccs~for a given function is to use the smallest amount of redundancy that allows the recovery of the attribute.%

A key aspect for the design of \fccs~is the topology of the function regions, i.e., the sets of message vectors that evaluate to the same function value. Since \fccs~protect a specific function evaluation of the message, the receiver does not need to distinguish between codewords from messages that evaluate to the same function value. This means that the distance between any two codewords within one function region is irrelevant. On the other hand, codewords corresponding to different function values should have appropriate distances. Consequently, the redundancy vectors of an \fcc~have to fulfill an irregular distance profile, where each pair of redundancy vectors has to satisfy an individual distance constraint. %

\emph{Application:} %
The employment of \fccs~can, for example, be beneficial in archival data storage. Consider a large data set, or message in our terminology, which is stored on a noisy storage medium. The message may be encoded with an error-correcting code to ensure reliability when retrieving it. Now assume that an attribute with highly sensitive information, which can be modeled as a function evaluation on the message is to be stored on the same medium. Due to the importance of this peculiar attribute, we desire to add an extra layer of protection for it. A natural solution is to encode the attribute with an error-correcting code with high error-correction capability. However, this idea is oblivious to the fact that the message is stored on the same storage medium. We propose to leverage this fact through the use of \fccs\ that may require less redundancy as we show in the sequel. When reading the message and its attribute through a noisy channel, errors may happen. To abstract the fact that more errors can happen in the message than the error-correcting code used to encode it can tolerate, we assume that a potentially noisy version of the data is available to the receiver. \fccs~thus provide an individual level of protection to specific attributes of the message, offering higher flexibility and efficiency over classical error-correcting codes.

\emph{Related works:}
Unequal error protection (UEP) codes \cite{masnick_linear_1967,boyarinov_linear_1981} allow a stronger protection of specific parts of the message. %
The connection between \fccs~and codes for UEP manifests in the two levels of protection that \fccs~provide to the message and the attribute.
Since the message can either be unprotected or can itself be the codeword of an error-correcting code and the attribute is separately protected by the \fcc, it is possible to control the error protection level of these parts. As the attribute is the evaluation of an arbitrary function on the message, it is also possible that it is simply a substring of the message, resulting in UEP for this part of the message. In general however, the attribute may be an arbitrary function of the message and hence, in this aspect, \fccs~form a more general class of codes. On the other hand, UEP codes may protect an arbitrary number of attributes of interest.

In the context of random access memories, codes with unequal message protection have been designed in \cite{schoeny_context-aware_2019}. The authors construct codes that guarantee larger distances for codewords that stem from a specific predefined subset of messages. Similar to our work, the required distance varies between pairs of codewords. In contrast to our work, the distance requirement in \cite{schoeny_context-aware_2019} depends on groups to which the codewords are assigned, and here, as we will show later, we require an individual distance for each pair of codewords. For an information-theoretic study of both unequal error and message protection codes see~\cite{borade_unequal_2009}.

Another related line of work \cite{ahlswede_get_1981,orlitsky_coding_2001,kuzuoka_distributed_2016} studies the scenario, where a sender wishes to communicate a message to a receiver such that the receiver can determine the evaluation of a function on their combined data. Therein, optimal transmission rates for which the recovery of the function evaluation is possible, are derived. There are several important aspects that differentiates our work from these papers. First, we study zero-error codes over an adversarial channel as opposed to non-zero, but vanishing error probabilities. Second, the message sent from Alice to Bob may contain errors in our setup. Notice however that \cite{orlitsky_coding_2001} uses a characteristic graph \cite{witsenhausen_zero-error_1976}, which is similar in spirit to the irregular-distance codes defined later.

The zero-rate threshold for adversarial channels is derived in \cite{wang_when_2019}. Studying general channels, \cite{wang_when_2019} deals with a broad notion of \emph{confusability} between codewords. This is similar to the irregular distance codes in this work, however with the important distinction that here the confusability depends on the assignment between message vectors and codewords.

Codes that protect the output of a given machine learning algorithm against errors have been investigated in~\cite{mazooji_robust_2016,kabir_coded_2017,huang_functional_2020-1}. While \cite{mazooji_robust_2016,kabir_coded_2017} tailored their construction to optimize classification algorithms, \cite{huang_functional_2020-1} applied codes to the weights of the neurons in a neural network with the goal to optimize the output model of the neural network. These works prove that application-specific codes that protect the output of an algorithm against errors can outperform classical error-correcting codes. In principle, we follow a similar idea in this work, however the research in \cite{mazooji_robust_2016,kabir_coded_2017,huang_functional_2020-1} is specialized to specific classes of functions, while we discuss arbitrary functions. On the other hand, with the current state of research, it seems infeasible to practically and efficiently apply our generic results to such intricate functions.

We further would like to highlight the following works on error-correction within computations. Fourier stabilization has been used in~\cite{raviv_enhancing_2021} to increase the robustness of a neural network. Therein, error resilience was achieved by replacing the weights of neurons according to the solution of an associated combinatorial optimization problem. In~\cite{roth_fault-tolerant_2019,roth_analog_2020,dupraz_noisy_2020} computation in faulty dot-product engines is treated. While \cite{roth_fault-tolerant_2019,roth_analog_2020} construct codes over integers and real numbers that protect the computation of a matrix-vector product, \cite{dupraz_noisy_2020} propose a theoretical framework for the error analysis of memristor crossbars.
Codes that correct and detect errors in arithmetic operations are discussed in~\cite[ch. 10]{lint_introduction_1999}.

\emph{Contributions:} This paper builds a general theory for function-correction over adversarial channels. For arbitrary functions, we establish a connection between \fccs~and \emph{irregular-distance codes}. In particular, we show that the redundancy of an \FCC~is given by the shortest length of an irregular-distance code, which has a punctured pair-wise distance profile, which depends on the function. Deriving general lower and upper bounds on the optimal length of irregular distance codes, we obtain corresponding bounds on the optimal redundancy of \fccs~for arbitrary functions. These results are applied to specific functions such as locally binary functions, the Hamming weight, the Hamming weight distribution, the min-max function and %
discretized real-valued functions. Finally, the redundancy of \fccs~for specific functions is compared to schemes that use standard error-correcting codes. We
restrict our attention to binary channels in this work,
however most results can be generalized straightforwardly to
larger alphabets. A summary of our quantitative results for specific functions is summarized and displayed in Table \ref{tab:redundancy}.

\emph{Organization:} Section~\ref{sec:prelim} summarizes the problem considered and the main notations of the paper. Next, we study generic functions in Section~\ref{sec:generic} and reveal the fundamental connection between \fccs, irregular-distance codes, and independent sets in certain graphs. To this end, we show that the optimal redundancy of an \fcc~is equal to the smallest length of an irregular-distance code. We then provide simplified results that are easier to evaluate, especially for functions with entwined function regions. Further, generic converse and existential bounds on irregular-distance codes are presented. We then apply our generic results to specific functions in Sections~\ref{sec:locally_binary}, \ref{sec:hamming}, \ref{sec:min_max} and~\ref{sec:ml_functions}. Section~\ref{sec:conclusion} concludes the paper.

\section{Preliminaries}\label{sec:prelim}
\begin{table*}
	\centering
	\caption{Summary of results on the optimal redundancy of \fccs. The entries marked with superscript $*$ are approximations for large dataset dimensions $k$ and expressiveness $\I$ (where applicable), and fixed number of errors $t$, where lower order terms are neglected. The redundancy of \fccs~is displayed for the case where Hadamard matrices of correct size exist, cf.  Lemma \ref{lemma:regular:distance:hadamard}. These restrictions and regimes are chosen to allow for better comparison, however our results are not restricted to these regimes. Precise definitions of the displayed functions can be found in Section \ref{sec:locally_binary} (binary and locally binary), Section \ref{sec:hamming_w} (Hamming weight), Section \ref{sec:hamming_d} (Hamming weight distribution) and Section \ref{sec:min_max} (min-max). The redundancies \emph{ECC on Data} and \emph{ECC on Function Values} are derived in Appendix \ref{app:derivation:redundancies:table}.}
	{\renewcommand{\arraystretch}{2.1}
			\setlength{\tabcolsep}{6.5pt}
		\begin{tabular}{cccccc} \specialrule{.8pt}{0pt}{0pt}
			Function & Parameters & Lower Bound & ECC on Data & ECC on Function Values & \fcc  \\ \specialrule{.8pt}{0pt}{0pt}
			Binary& - & $2t$ & $t \log k$ $~^{*}$ & $2t+1$ & $2t$ \\
			Locally binary & $E$ & $2t$ & $t \log k$ $~^{*}$ & $\log E + t\log\log E$ $~^{*}$ & $2t$ \\
			Hamming weight $\wt(\u)$ & - & $\frac{10}{3}(t-1)$ &  $t \log k$ $~^{*}$ & $\log k + t \log \log k$ $~^{*}$ & $4t$ \\
			Hamming weight distribution $\Delta_T(\u)$ & \parbox{2.7cm}{\centering Threshold $T\geq 2t+1$,\\$E=\frac{k+1}{T}$} & $2t$ & $t \log k$ $~^{*}$ & $\log E + t \log \log E$$~^{*}$ & $2t$ \\
			Min-max $\minmax_w(\u)$  & Num. parts $w \gg 2t$ & $\begin{array}{cc}
				2 \log w ~+ \\[-.3cm]
				(t-2)\log \log w
			\end{array}$ $\!\!\!\!\!\! {\scriptstyle*}$ & $t \log k$ $~^{*}$ & $2 \log w + t \log \log w$ $~^{*}$ & $\begin{array}{cc}
			2 \log w ~+ \\[-.3cm]
			t\log \log w
		\end{array}$ $\!\!\!\! {\scriptstyle*}$ \\ \specialrule{.8pt}{0pt}{0pt}
	\end{tabular}}
	\label{tab:redundancy}
\end{table*}

Let $\u \in \Z_2^k $ be the binary message and let $f: \Z_2^k \mapsto \Im(f) \triangleq \{f(\u): \u \in \Z_2^k\}$ be a function computed on $\u$ with \emph{expressiveness} $\I\triangleq |\Im(f)|\leq 2^k$.\footnote{The nature of the image $\Im(f)$, apart from its size, is not relevant in this paper. Thus, it is not further specified.} The message is encoded via the encoding function $\Enc:\Z_2^k \mapsto \Z_2^{k+r}$, $ \Enc(\u) = (\u,\p(\u)), $
where $\p(\u) \in \Z_2^r$ is the \emph{redundancy vector} and $r$ is the \emph{redundancy}. The resulting codeword $\Enc(\u)$ is transmitted over an erroneous channel, resulting in $\ve{y} \in \Z_2^{k+r}$ with $d( \Enc(\u),\y) \leq t $, where $d(\x,\y)$ is the Hamming distance of $\x$ and $\y$. We define {\fccs} as follows.
	\begin{defn} \label{def:function:correcting:code}
		An encoding function $\Enc:\Z_2^k\to\Z_2^{k+r}$ with $\Enc(\u) = (\u,\p(\u))$, $\u \in \Z_2^k$ defines a function-correcting code for the function $f:\Z_2^k \to \Im(f)$ if for all $\u_1,\u_2 \in \Z_2^k$ with $f(\u_1) \neq f(\u_2)$, it holds that
		$$ d(\Enc(\u_1), \Enc(\u_2)) \geq 2t+1. $$
	\end{defn}

	By this definition, given any $\y$, which is obtained by at most $t$ errors from $\Enc(\u)$, the receiver can uniquely recover $f(\u)$, if it has knowledge about the function $f(\bullet)$ and the encoding function $\Enc(\bullet)$. Noteworthily, only codewords that originate from information vectors (messages) that evaluate to different function values need to have distance at least $2t+1$. Throughout the paper, a \emph{standard error-correcting code} is an \fcc~for $f(\u)=\u$, i.e., a code that allows to reconstruct the whole message $\u$. We summarize some basic properties of \fccs~in the following.
		\begin{itemize}
			\item For any bijective function $f$, any {\fcc} is a standard error-correcting code.
			\item For any constant function $f$, the encoder $\Enc(\u) = \u$ is an \fcc\ with redundancy $0$.
			\item If the encoder has no knowledge about the function $f$, function-correction is only possible using standard error-correcting codes.
		\end{itemize}

Note that the encoding and decoding complexity of \fccs~may be higher or lower than that of standard error-correcting codes and heavily depends on the function $f$.

The main quantity of interest in this paper is the optimal redundancy of an \fcc~that is designed for a function~$f$.
\begin{defn}
	The optimal redundancy $r_{f}(k,t)$ is defined as the smallest $r$ such that there exists an \fcc\ with encoding function $\Enc:\Z_2^k \to \Z_2^{k+r}$ for the function $f$.
\end{defn}
For any integer $M$, we write $[M]^+ \triangleq\max\{M,0\}$ and we let $[M]\triangleq \{1,\dots,M\}$. For a matrix $\D$, we denote by $[\D]_{ij}$ the $(i,j)$th entry of $\D$. For any two real numbers $a,b\in \mathbb{R}$, we define the closed and half-closed interval by $[a,b] \triangleq \{x \in \mathbb{R}: a\leq x\leq b\}$ and $[a,b) \triangleq \{x \in \mathbb{R}: a\leq x< b\}$. We denote by $\mathbb{N}_0$ the set of non-negative integers. Note that while our quantitative results in this paper are for substitution channels, the concepts can be generalized to other channels.

\section{Generic Functions}\label{sec:generic}

This section is devoted to establishing general results on \fccs. We start by showing the equivalence of \fccs, irregular-distance codes (Definition~\ref{def:fcc2}), and independent sets\footnote{An independent set of an undirected graph is a subset of vertices, where no two vertices are connected by an edge.} in an associated graph (Definition~\ref{def:graph}). We proceed afterwards with establishing several lower and upper bounds on the optimal redundancy of \fccs~using these connections.

We begin with introducing irregular-distance codes. To this end, define the distance matrix of a function $f$ as follows. 
\begin{defn}
    Let $\u_1,\dots,\u_M \in \Z_2^k$. We define the distance requirement matrix $\D_f(t,\u_1,\dots,\u_{M})$ of a function $f$ as the $M \times M$ matrix with entries
	$$ [\D_f\hspace{-0.25ex}(t,\hspace{-0.25ex}\u_1,\hspace{-0.25ex}\dots,\hspace{-0.25ex}\u_{M})]_{ij} \hspace{-0.25ex}\!=\!\hspace{-0.25ex} \left\{ \!\!\! \begin{array}{ll}
	[2t\!+\!1\!-\!d(\u_i,\hspace{-0.25ex}\u_j)]^+, \!\!\!\!\!& \text{\hspace{-0.25ex}if } f(\u_i) \hspace{-0.4ex} \neq\hspace{-0.4ex} f(\u_j),\\
	0,& \text{otherwise.}
	\end{array} \right. \!\!\! $$
\end{defn}
Let $\P = \{\p_1,\p_2,\dots,\p_M\} \subseteq \Z_2^r$ be a code of length $r$ and cardinality $M$. Here, we choose $r$ as the code blocklength, as we will relate the code length $r$ to the redundancy of {\fccs} later. Irregular-distance codes are formally defined as follows. %
\begin{defn}\label{def:fcc2}
	Let $\D \in \mathbb{N}_0^{M\times M}$. Then, $\P = \{\p_1,\p_2,\dots,\p_M\}$ is a $\D$-code, if there exists an ordering of the codewords of $\P$ such that $d(\p_i,\p_j) \geq [\D]_{ij}$ for all $i,j \in [M]$.
	
	Further, we define $N(\D)$ to be the smallest integer $r$ such that there exists a $\D$-code of length $r$. If $\Dij = D$ for all $i \neq j$ we write $N(M, D)$.
\end{defn}

With this definition, a $\D$-code requires individual distances between each pair of codewords.

Next, we define a function-dependent graph, whose independent sets, if large enough, form an \FCC. The vertices constitute possible codewords of the \fcc~and we connect two vertices, if they can be contained together in an \fcc. 
\begin{defn} \label{def:graph}
	We define $G_f(k,t,r)$ to be the graph with vertex set $V = \{0,1\}^k \times \{0,1\}^r$, such that each vertex has the form $\boldsymbol{x} = (\boldsymbol{u},\boldsymbol{p}) \in \{0,1\}^{k+r}$. Two vertices $\boldsymbol{x}_1 = (\boldsymbol{u}_1,\boldsymbol{p}_1)$ and $\boldsymbol{x}_2 = (\boldsymbol{u}_2,\boldsymbol{p}_2)$ are connected if $\u_1=\u_2$ or both $f(\boldsymbol{u}_1) \neq f(\boldsymbol{u}_2)$ and $d(\boldsymbol{x}_1, \boldsymbol{x}_2) < 2t+1$ hold.
	
	We denote by $\gamma_f(k,t)$ the smallest integer $r$ such that there exists an independent set of size $2^k$ in $G_f(k,t,r)$.
\end{defn}
\begin{figure}
	\centering
\pgfdeclarelayer{bg}
\pgfsetlayers{bg,main}
	\begin{tikzpicture}
		\draw (1*360/16: 3.8cm) node[ultra thick, draw, rectangle, fill=black!10!white] (u0000) { \footnotesize $(00,00)$};
		\draw (2*360/16: 3.8cm) node[draw, rectangle, fill=black!10!white] (u0010) { \footnotesize $(00,10)$};
		\draw (3*360/16: 3.8cm) node[draw, rectangle, fill=black!10!white] (u0001) { \footnotesize $(00,01)$};
		\draw (4*360/16: 3.8cm) node[draw, rectangle, fill=black!10!white] (u0011) { \footnotesize $(00,11)$};
		\draw (5*360/16: 3.8cm) node[draw, rectangle, fill=white!10!white] (u1000) { \footnotesize $(10,00)$};
		\draw (6*360/16: 3.8cm) node[draw, rectangle, fill=white!10!white] (u1010) { \footnotesize $(10,10)$};
		\draw (7*360/16: 3.8cm) node[draw, rectangle, fill=white!10!white] (u1001) { \footnotesize $(10,01)$};
		\draw (8*360/16: 3.8cm) node[ultra thick, draw, rectangle, fill=white!10!white] (u1011) { \footnotesize $(10,11)$};
		\draw (9*360/16: 3.8cm) node[draw, rectangle, fill=white!10!white] (u0100) { \footnotesize $(01,00)$};
		\draw (10*360/16: 3.8cm) node[draw, rectangle, fill=white!10!white] (u0110) { \footnotesize $(01,10)$};
		\draw (11*360/16: 3.8cm) node[draw, rectangle, fill=white!10!white] (u0101) { \footnotesize $(01,01)$};
		\draw (12*360/16: 3.8cm) node[ultra thick, draw, rectangle, fill=white!10!white] (u0111) { \footnotesize $(01,11)$};
		\draw (13*360/16: 3.8cm) node[draw, rectangle, fill=white!10!white] (u1100) { \footnotesize $(11,00)$};
		\draw (14*360/16: 3.8cm) node[draw, rectangle, fill=white!10!white] (u1110) { \footnotesize $(11,10)$};
		\draw (15*360/16: 3.8cm) node[draw, rectangle, fill=white!10!white] (u1101) { \footnotesize $(11,01)$};
		\draw (16*360/16: 3.8cm) node[ultra thick, draw, rectangle, fill=white!10!white] (u1111) { \footnotesize $(11,11)$};
		
		\begin{pgfonlayer}{bg}
			\draw[-] (u0000) -- (u0010);
			\draw[-] (u0000) -- (u0001);
			\draw[-] (u0000) -- (u0011);
			\draw[-] (u0010) -- (u0001);
			\draw[-] (u0010) -- (u0011);
			\draw[-] (u0001) -- (u0011);
			
			\draw[-] (u1000) -- (u1010);
			\draw[-] (u1000) -- (u1001);
			\draw[-] (u1000) -- (u1011);
			\draw[-] (u1010) -- (u1001);
			\draw[-] (u1010) -- (u1011);
			\draw[-] (u1001) -- (u1011);
			
			\draw[-] (u0100) -- (u0110);
			\draw[-] (u0100) -- (u0101);
			\draw[-] (u0100) -- (u0111);
			\draw[-] (u0110) -- (u0101);
			\draw[-] (u0110) -- (u0111);
			\draw[-] (u0101) -- (u0111);
			
			\draw[-] (u1100) -- (u1110);
			\draw[-] (u1100) -- (u1101);
			\draw[-] (u1100) -- (u1111);
			\draw[-] (u1110) -- (u1101);
			\draw[-] (u1110) -- (u1111);
			\draw[-] (u1101) -- (u1111);
			
			\draw[-] (u0011) -- (u1010);
			\draw[-] (u0011) -- (u1001);
			\draw[-] (u0011) -- (u1011);
			\draw[-] (u0011) -- (u0110);
			\draw[-] (u0011) -- (u0101);
			\draw[-] (u0011) -- (u0111);
			\draw[-] (u0011) -- (u1111);
			
			\draw[-] (u0001) -- (u1000);
			\draw[-] (u0001) -- (u1011);
			\draw[-] (u0001) -- (u1001);
			\draw[-] (u0001) -- (u0100);
			\draw[-] (u0001) -- (u0111);
			\draw[-] (u0001) -- (u0101);
			\draw[-] (u0001) -- (u1101);
			
			\draw[-] (u0010) -- (u1011);
			\draw[-] (u0010) -- (u1000);
			\draw[-] (u0010) -- (u1010);
			\draw[-] (u0010) -- (u0111);
			\draw[-] (u0010) -- (u0100);
			\draw[-] (u0010) -- (u0110);
			\draw[-] (u0010) -- (u1110);
			
			\draw[-] (u0000) -- (u1001);
			\draw[-] (u0000) -- (u1010);
			\draw[-] (u0000) -- (u1000);
			\draw[-] (u0000) -- (u0101);
			\draw[-] (u0000) -- (u0110);
			\draw[-] (u0000) -- (u0100);
			\draw[-] (u0000) -- (u1100);

		\end{pgfonlayer}
	\end{tikzpicture}
\caption{Graph $G_f(k,t,r)$ for $k=2,t=1,r=2$, and the function $f((u_1,u_2)) = (u_1 \lor u_2)$. An independent set of size $2^k=4$, i.e., an FCC for $f$, is highlighted in bold. The background colors highlight different function values. }
\label{fig:independent:graph}
\end{figure}
This graph resembles the characteristic graph in \cite{witsenhausen_zero-error_1976}, however differs due to the fact that $\u$ is observed through the channel and that functions depend on the whole message vector in our problem formulation. Note that the edges between vertices with $\u_1=\u_2$ enforce the property that each information vector $\u$ is assigned exactly one redundancy vector $\p(\u)$. Fig.~\ref{fig:independent:graph} visualizes the graph $G_f(k,t,r)$ and a corresponding \fcc~for a concrete example.

We find the following central connection between the redundancy of optimal {\fccs}, irregular-distance codes, and independent sets in the associated graphs..
\begin{thm}[] \label{thm:optimal:fcc:sub}
	For any function $f: \mathbb{Z}_2^k \to \Im(f)$,
	$$ r_f(k,t)= \gamma_f(k,t)= N(\D_f(t,\u_1,\dots,\u_{2^k})), $$
	where $\{\u_1,\dots,\u_{2^k}\} = \Z_2^k$ are all binary vectors of length $k$.
\end{thm}
\begin{proof}
	The first equality is immediate as an independent set in $G_f(k,t,r)$ exactly captures the required properties of an \fcc. Further, the independent set has to have size $2^k$ such that there is one codeword for every message vector.
	
	Next, we see that \mbox{$r_f(k,t)\geq N(\D_f(t,\u_1,\dots,\u_{2^k}))$} is necessary, as assuming to the contrary that $r_f(k,t)<N(\D_f(t,\u_1,\dots,\u_{2^k}))$ implies that there must exist two redundancy vectors $\p_i$ and $\p_j$, $i\neq j$ with $d(\p_i,\p_j)<2t+1-d(\u_i,\u_j)$ and hence $d(\Enc(\u_i),\Enc(\u_j)) = d(\u_i,\u_j) + d(\p_i,\p_j) < 2t+1$, which contradicts Definition~\ref{def:function:correcting:code}.
	
	On the other hand $r_f(k,t)\leq N(\D_f(t,\u_1,\dots,\u_{2^k}))$, as using a correctly assigned $\D_f(t,\u_1,\dots,\u_{2^k})$-code for the redundancy vectors gives an \fcc.
\end{proof}
\begin{remark}
	The irregularity of the distance profile of \fccs\ comes from imposing distance constraints on the redundancy vectors as opposed to codewords. In our analysis, we found this approach to naturally capture the interplay between the message and the redundancy part and to help with the derivation of simplified bounds and constructions, which are presented in the sequel.
\end{remark}

With the result of Theorem~\ref{thm:optimal:fcc:sub}, one can deduce insights into \fccs~using known results about the sizes of independent sets in general graphs, such as \cite{gu_generalized_1993,tolhuizen_generalized_1997}.

However, the problem of finding \emph{optimal} \fccs~requires the determination of whether the size of the largest independent set meets the threshold $2^k$. The related problem of finding a maximal independent set in arbitrary graphs is known to be NP-complete \cite{miller_reducibility_1972}, which indicates that also the problem of finding optimal \fccs~is complex, unless the structure of the analyzed function $f$ imposes an easily tractable graph structure that simplifies the search for large independent sets.

This implies that the construction of optimal \fccs~may become computationally infeasible for large parameters and unstructured functions. To cope with such scenarios, we derive simplified, possibly sub-optimal, results on irregular-distance codes, in order to facilitate the research for arbitrary functions. We proceed with deriving results that act on a smaller set of information vectors and ease the derivation of analytical results.

\subsection{Simplified Redundancy Lower Bounds}
We first compute simplified lower bounds on the optimal redundancy of \fccs. Using an arbitrary subset of information vectors $\u_1,\dots,\u_M$ with $M \leq 2^k$ we can obtain a lower bound on the redundancy as follows.
\begin{corollary} \label{cor:lower:bound}
	Let $\u_1,\dots,\u_{M} \in \Z_2^k$ be arbitrary different vectors. Then, the redundancy of an \fcc\ is at least
	$$r_f(k,t) \geq N(\D_f(t,\u_1,\dots,\u_{M})).$$
	For any function $f$ with $|\Im(f)|\geq2$,
	$$r_f(k,t)\geq 2t.$$
\end{corollary}
\begin{proof}
	The first statement is immediate, since any subset of information vectors must also fulfill the \fcc~conditions.
	
	Since $|\Im(f)| \geq 2$, there exist $\u,\u' \in \Z_2^k$ with $d(\u,\u') = 1$ and $f(\u)\neq f(\u')$. It follows that $r_f(k,t) \geq N(2,2t)$. Further, $N(2,2t)=2t$, which is attained by the repetition code $\P = \{(0,\dots,0), (1,\dots,1)\}$ of length $2t$.
\end{proof}
Finding $N(\D_f(t,\u_1,\dots,\u_{2^k}))$ is in general quite difficult and it can be easier to focus only on a small but representative subset of information vectors. However, the particular subset heavily depends on the function itself and it is not possible to give a generic approach on how a good subset can be found. Loosely speaking, good bounds are obtained for information vectors that have distinct function values and are close in Hamming distance. Throughout this paper, we will provide some insights on good choices of information vectors using illustrative examples.

\subsection{Simplified Existential Bounds}
We proceed with simplifying Theorem~\ref{thm:optimal:fcc:sub} in order to obtain easier computable existential bounds. We start by defining the distance between two function values.
\begin{defn}\label{def:function_distance}
	The distance between two function values $f_1,f_2 \in \Im(f)$ is defined as the smallest distance between two information vectors that evaluate to $f_1$ and $f_2$, i.e.,
	$$ d_f(f_1,f_2) \triangleq \underset{\u_1,\u_2 \in \Z_2^k}{\min} d(\u_1,\u_2)~\text{s.t.}~ f(\u_1) = f_1 \land f(\u_2) = f_2.$$
\end{defn}
Note that the distance $d_f(f_1,f_1) = 0, \forall f_1 \in \Im(f)$. The function-distance matrix of $f$ is thus defined as follows.
\begin{defn}
    The function-distance matrix of a function $f$ is denoted by the $\I \times \I$ matrix $\D_f(t,f_1,\dots,f_\I)$ with entries
    $ [\D_f(t,f_1,\dots,f_\I)]_{ij} = [2t+1-d_f(f_i,f_j)]^+,$
    if $i\neq j$ and $[\D_f(t,f_1,\dots,f_\I)]_{ii} = 0$.
\end{defn}
One way to construct {\fccs} is to assign the same redundancy vector to all information vectors $\u$ that evaluate to the same function value. This is not a necessity, however it gives rise to the following existence theorem.
\begin{thm} \label{thm:achievability}
	For any arbitrary function $f: \mathbb{Z}_2^k \to \Im(f)$,%
	$$ r_f(k,t) \leq N(\D_f(t,f_1,\dots,f_\I)) . $$
\end{thm}
\begin{proof}
	We describe how to construct an \fcc. The redundancy vectors are chosen to depend only on the function value of $\u$, i.e., the encoding mapping is defined by $ \u \mapsto (\u,\p(f(\u)))$. Denote by $\p_i$ the redundancy vector assigned to all $\u$ with $f(\u) = f_i$. Therefore, two information vectors with the same function value have the same redundancy vectors. We then choose $\p_1,\dots,\p_\I$ such that $d(\p_i,\p_j) \geq 2t+1-d_f(f_i,f_j)$. It follows that for any $\u_i,\u_j$ with $f(\u_i) = f_i$, $f(\u_j)=f_j$, $f_i\neq f_j$, we have $d(\Enc(\u_i),\Enc(\u_j)) = d(\u_i,\u_j) + d(\p_i,\p_j) \geq d_f(f_i,f_j) + 2t+1-d_f(f_i,f_j) = 2t+1 $. By Definition~\ref{def:fcc2} we can guarantee the existence of such parity vectors $\p_1,\dots,\p_\I$, if they have length $N(\D_f(t,f_1,\dots,f_\I))$.
\end{proof}
There are cases in which the bound in Theorem \ref{thm:achievability} is tight. We characterize one important case in the following corollary, which is a consequence of Corollary~\ref{cor:lower:bound} and Theorem~\ref{thm:achievability}.
\begin{corollary}
	If there exists a set of representative information vectors $\u_1,\dots,\u_\I$ with $\{f(\u_1),\dots,f(\u_\I)\} = \Im(f)$ and $\D_f(t,\u_1,\dots,\u_\I) = \D_f(t,f_1,\dots,f_\I)$, then
	$$ r_f(k,t) = N(\D_f(t,f_1,\dots,f_\I)). $$
\end{corollary}
Even though the bound in Theorem \ref{thm:achievability} is not necessarily tight, in many cases it is much easier to derive the function distance matrix $\D_f(t,f_1,\dots,f_\I)$ than the distance requirement matrix $\D_f(t,\u_1,\dots,\u_{2^k})$ and the corresponding value $N(\D_f(t,f_1,\dots,f_\I))$, especially when $\I$ is small. 

\subsection{Irregular-Distance Codes}

We summarize some results about $N(\D)$ here, which allow us to obtain results on the redundancy of \fccs~using Theorems \ref{thm:optimal:fcc:sub} and \ref{thm:achievability}. We start with a generalization of the Plotkin bound \cite{plotkin_binary_1960} on codes with irregular distance requirements.
\begin{lemma} \label{lemma:irregular:distance:code:plotkin} For any distance matrix $\D \in \mathbb{N}_0^{M\times M}$, 	
	$$ N(\D) \geq \begin{cases}
		\frac{4}{M^2} \sum\limits_{i,j:i<j}[\D]_{ij},  \text{ if } M \text{ is even}, \\ \frac{4}{M^2-1} \sum\limits_{i,j:i<j}[\D]_{ij},  \text{ if } M \text{ is odd.}
	\end{cases} $$
\end{lemma}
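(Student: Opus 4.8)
The plan is to mimic the classical averaging argument behind the Plotkin bound, but to replace the uniform minimum distance by the individual pairwise requirements $[\D]_{ij}$. Fix any $[M,\D]$ code $\P = \{\p_1,\dots,\p_M\} \subseteq \Z_2^r$ of optimal length $r = N(M,\D)$, and consider the total pairwise distance
$$ S \triangleq \sum_{i \neq j} d(\p_i,\p_j) = 2\sum_{i<j} d(\p_i,\p_j). $$
I will bound $S$ from below using the code constraints and from above using a column-counting argument; comparing the two bounds isolates $r$.

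For the lower bound, the defining property $d(\p_i,\p_j) \geq [\D]_{ij}$ of Definition~\ref{def:fcc2} immediately gives $S \geq 2\sum_{i<j}[\D]_{ij}$. For the upper bound, I would swap the order of summation and count the contribution of each coordinate separately. Writing $m_\ell$ for the number of codewords whose $\ell$th entry equals $1$, each of the $r$ coordinates contributes $2\,m_\ell(M-m_\ell)$ to $S$, since an ordered pair $(\p_i,\p_j)$ differs in coordinate $\ell$ exactly when one of the two entries is $0$ and the other is $1$. Hence $S = 2\sum_{\ell=1}^r m_\ell(M-m_\ell)$. The single elementary inequality I then need is $m_\ell(M-m_\ell) \leq \lfloor M/2\rfloor\lceil M/2\rceil$, maximized by balancing the zeros and ones within a column; this evaluates to $M^2/4$ when $M$ is even and to $(M^2-1)/4$ when $M$ is odd.

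Combining the two bounds yields $2\sum_{i<j}[\D]_{ij} \leq S \leq 2r\,\lfloor M/2\rfloor\lceil M/2\rceil$, and solving for $r = N(M,\D)$ reproduces exactly the two claimed cases. There is no genuine obstacle here beyond bookkeeping: the whole argument reduces to the double-counting identity for $S$ together with the balancing inequality, and the even/odd split in the statement is nothing more than the two values of $\lfloor M/2\rfloor\lceil M/2\rceil$. The only point worth watching is that the codewords need not be indexed so that $d(\p_i,\p_j)$ matches $[\D]_{ij}$ row by row; but the lower bound invokes all of the inequalities simultaneously by summing over every pair, and the column count is entirely index-free, so the permutation ambiguity noted after Definition~\ref{def:fcc2} does not affect the estimate.
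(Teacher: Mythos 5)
Your proof is correct and takes essentially the same approach as the paper: the classical Plotkin column-counting argument, bounding each coordinate's contribution to the total pairwise distance by $\lfloor M/2\rfloor\lceil M/2\rceil$ and comparing against $\sum_{i<j}[\D]_{ij}$, with the even/odd split arising exactly as in the paper's proof. Your explicit bookkeeping via $m_\ell(M-m_\ell)$ and the remark on permutation-invariance of the summed lower bound merely spell out details the paper leaves implicit.
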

\begin{proof}
	We start by proving the statement for $M$ even. Let $\p_1,\dots,\p_M$ be codewords of a $\D$-code of length $r$. Stack these codewords as rows of a matrix $\matP$. Since each column of the matrix $\matP$ can contribute at most $\frac{M^2}{4}$ to the sum $\sum_{i,j:i<j}d(\p_i,\p_j)$ (when the weight of the column is exactly $\frac{M}{2}$), we have that
	$ \sum_{i,j:i<j}d(\p_i,\p_j) \leq r \frac{M^2}{4}. $
	On the other hand, by the definition of a $\D$-code, $d(\p_i,\p_j) \geq [\D]_{ij}$  and the statement follows. The statement for odd $M$ is proven accordingly using the fact that in this case the maximum contribution of a column is $\frac{M+1}{2}\frac{M-1}{2}$. 
\end{proof}
For the case of regular-distance codes with minimum distance $D$, Lemma~\ref{lemma:irregular:distance:code:plotkin} implies $N(M,D) \geq 2D \frac{M-1}{M}$, a variant of Plotkin's bound. Conversely, we can derive an achievability bound, which is a generalization of the well-known Gilbert-Varshamov bound \cite{gilbert_comparison_1952,varshamov_estimate_1957} to irregular-distance codes. To this end we define $V(r,d) = \sum_{i=0}^{d}\binom{r}{i}$ as the size of the binary radius-$d$ Hamming sphere over vectors of length $r$. 
\begin{lemma} \label{lemma:irregular:distance:code:gv} For any distance matrix $\D \in \mathbb{N}_0^{M\times M}$, and any permutation $\pi: [M] \to [M]$
	$$ N(\D) \leq {\min_{r\in \mathbb{N}}}~\left\{r:2^{r} > \underset{j \in [M]}{\max} \sum_{i=1}^{j-1} V(r,[\D]_{\pi(i)\pi(j)}-1)\right\}. $$
\end{lemma}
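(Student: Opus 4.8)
The plan is to prove this Gilbert--Varshamov-type bound by a greedy sequential construction of the codewords $\p_1,\dots,\p_M$, adding them one at a time in the order prescribed by the permutation $\pi$. The key idea is the standard GV observation that a codeword can be placed as long as its ``forbidden region'' does not cover the entire ambient space $\Z_2^r$, where the forbidden region for a newly added codeword is the union of the Hamming balls that would violate the distance constraints against the already-placed codewords.

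Concretely, I would relabel the indices via $\pi$ so that we attempt to place codewords in the order $\p_{\pi(1)},\p_{\pi(2)},\dots,\p_{\pi(M)}$. Suppose $\p_{\pi(1)},\dots,\p_{\pi(j-1)}$ have already been chosen so that all pairwise distance requirements among them hold. To choose $\p_{\pi(j)}$, we must avoid, for each $i<j$, every point within Hamming distance $[\D]_{\pi(i)\pi(j)}-1$ of $\p_{\pi(i)}$, since any such point would fail the requirement $d(\p_{\pi(i)},\p_{\pi(j)}) \geq [\D]_{\pi(i)\pi(j)}$. The number of forbidden points contributed by $\p_{\pi(i)}$ is at most $V(r,[\D]_{\pi(i)\pi(j)}-1)$, the size of the ball of radius $[\D]_{\pi(i)\pi(j)}-1$. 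Hence the total forbidden set has size at most $\sum_{i=1}^{j-1} V(r,[\D]_{\pi(i)\pi(j)}-1)$. If this quantity is strictly less than $2^r$, a valid $\p_{\pi(j)}$ exists.

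The second step is to turn this per-step condition into a single condition guaranteeing the entire construction succeeds. Since the construction proceeds for $j=1,\dots,M$, all $M$ codewords can be placed provided the forbidden-set bound is below $2^r$ at \emph{every} step, i.e., provided $2^r > \max_{j\in[M]} \sum_{i=1}^{j-1} V(r,[\D]_{\pi(i)\pi(j)}-1)$. For the smallest $r$ satisfying this inequality, an $[M,\D]$ code of length $r$ exists, which gives exactly the claimed upper bound on $N(M,\D)$. The freedom in choosing $\pi$ is harmless: the inequality holds for the given permutation, so the bound is valid for any fixed $\pi$, and one would typically optimize over $\pi$ afterward.

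I do not anticipate a serious obstacle here, since this is a direct adaptation of the classical GV argument; the only point requiring care is the bookkeeping that makes the permutation $\pi$ and the asymmetric, index-dependent radii $[\D]_{\pi(i)\pi(j)}$ line up correctly in the sum. In particular, I would double-check that it is the distance requirement between the new codeword $\p_{\pi(j)}$ and each earlier codeword $\p_{\pi(i)}$ that governs the radius, so that the ball radius is $[\D]_{\pi(i)\pi(j)}-1$ and not something symmetric that ignores the ordering. The step I would scrutinize most is the claim that a single uniform radius $r$ works across all steps; because $V(r,d)$ is increasing in $r$, the inequality at the chosen $r$ indeed dominates, so fixing one $r$ for the whole construction is justified.
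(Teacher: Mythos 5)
Your proposal is correct and follows essentially the same route as the paper's proof: a greedy sequential selection of codewords in the order given by $\pi$, where each new codeword $\p_{\pi(j)}$ exists whenever the union of forbidden Hamming balls of radii $[\D]_{\pi(i)\pi(j)}-1$ around the already-placed codewords, of total size at most $\sum_{i=1}^{j-1} V(r,[\D]_{\pi(i)\pi(j)}-1)$, does not exhaust $\Z_2^r$. Your additional care about the per-step condition being dominated by the maximum over $j$, and about the radii being governed by the requirement between the new and each earlier codeword, matches the paper's reasoning exactly.
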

\begin{proof}
	We describe how to construct a code of length $r$ meeting the distance requirements by iteratively selecting valid codewords. Assume first for simplicity that $\pi(i)=i$. Start by choosing an arbitrary codeword $\p_1 \in \Z_2^r$. Then, choose a valid codeword $\p_2$ as follows. Since the distance of $\p_1$ and $\p_2$ needs to be at least $[\D]_{12}$, we choose an arbitrary $\p_2$ such that $d(\p_1,\p_2) \geq [\D]_{12}$. Such a codeword $\p_2$ exists, if the length satisfies $2^{r} > V(r,[\D]_{12}-1)$. Next, we choose the third codeword $\p_3$. Similarly as before, we need to have $d(\p_1,\p_3) \geq [\D]_{13}$ and also $d(\p_2,\p_3) \geq [\D]_{23}$. If $2^{r} > V(r,[\D]_{13}-1) +V(r,[\D]_{23}-1)$ we can guarantee the existence of such a codeword $\p_3$. The theorem then follows by iteratively selecting the remaining codewords $\p_j$ such that $d(\p_i,\p_j) \geq [\D]_{ij}$ for all $i<j$. Under the condition of the theorem, we can guarantee existence of all codewords. Since the codewords can be chosen in an arbitrary order, the lemma holds for any order $\pi$ in which the codewords are selected.
\end{proof}
Note that for codes with $[\D]_{ij}=D$, this bound results in the well-known Gilbert-Varshamov bound \cite{gilbert_comparison_1952,varshamov_estimate_1957}.

Several of our results in the following require codes of small cardinality, i.e., the code size is in the same order of magnitude as the minimum distance. The following result is based on Hadamard codes \cite{macwilliams_theory_2007,horadam_hadamard_2007}.
\begin{lemma} (cf. \cite[Def. 3.13]{horadam_hadamard_2007}) \label{lemma:regular:distance:hadamard}
	Let $D \in \mathbb{N}$ be such that there exists a Hadamard matrix of order $D$ and $M \leq 4D$. Then,
	$$ N(M,D) \leq 2D. $$
\end{lemma}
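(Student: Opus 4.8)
The plan is to realize the bound by the classical Hadamard code construction, after first boosting the order of the available Hadamard matrix from $D$ to $2D$. Let $H$ be a Hadamard matrix of order $D$, so that $HH^{T} = D\,I_{D}$, and let $H_2 = \left(\begin{smallmatrix}1&1\\1&-1\end{smallmatrix}\right)$ be the order-$2$ Hadamard matrix. First I would form the Kronecker product $H' = H \otimes H_2$, a $\pm 1$ matrix of order $2D$ satisfying
$$ H'(H')^{T} = (HH^{T})\otimes(H_2 H_2^{T}) = (D\,I_D)\otimes(2\,I_2) = 2D\,I_{2D}. $$
Thus $H'$ is a Hadamard matrix of order $2D$, and in particular any two distinct rows of $H'$ are orthogonal. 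This order-doubling is exactly what lets the construction reach length $2D$ and cardinality $4D$, rather than the $D$ and $2D$ one would get directly from $H$.

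Next I would extract a binary code of length $2D$ from $H'$ in the standard way: map each $\pm 1$ entry by $1\mapsto 0$, $-1\mapsto 1$, and take as codewords the $2D$ rows of $H'$ together with their $2D$ complements (the images of the rows of $-H'$), yielding $4D$ binary vectors of length $2D$. I would first check these $4D$ vectors are pairwise distinct, which follows because orthogonality forbids any row from being the negation of another (if $h_i=-h_j$ then $h_i\cdot h_j = -2D\neq 0$). The core computation is then the minimum-distance check, split into three cases by orthogonality: for two distinct rows, $h_i\cdot h_j=0$ forces equal agreement and disagreement counts, so their Hamming distance is exactly $2D/2=D$; a row and the complement of a \emph{different} row satisfy $h_i\cdot(-h_j)=0$ and are likewise at distance $D$; and a row and its \emph{own} complement are at distance $2D$. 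Hence the $4D$ vectors form a code of length $2D$ with minimum distance $D$.

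Since $M\le 4D$, I would finally select any $M$ of these vectors to obtain an $[M,D]$ code of length $2D$, giving $N(M,D)\le 2D$. I do not expect a genuine obstacle, as the argument is essentially a repackaging of the Hadamard code of \cite{hadmard}; the only points requiring care are the order-doubling and keeping the three distance cases (and the distinctness of all $4D$ vectors) straight. It is worth noting that this matches Corollary~\ref{cor:plotkin:regular}, which gives $N(4D,D)\ge 2D\cdot\frac{4D-1}{4D} > 2D-1$, so for $M=4D$ the construction is in fact optimal.
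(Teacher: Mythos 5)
Your proof is correct and is essentially the construction the paper itself invokes: the lemma is stated with only a citation to the standard Hadamard code (the $4D$ binary vectors obtained from the rows of an order-$2D$ Hadamard matrix and their complements, which have length $2D$ and minimum distance $D$), and your case analysis of the distances matches that construction exactly. The one step you supply beyond the citation, Sylvester doubling via $H\otimes H_2$ to pass from order $D$ to order $2D$, is precisely what is needed to make the paper's hypothesis (a Hadamard matrix of order $D$) suffice, and it is carried out correctly.
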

The range of the parameter $D$ is restricted to the limited knowledge of lengths for which Hadamard codes exist. Note that there exist other good codes of small size, such as weak flip codes \cite{lin_weak_2018-1}, however, they only attain the Plotkin bound for a limited range of parameters. In general, it is possible to puncture or juxtapose Hadamard codes (cf. Levenshtein's theorem \cite[Section 2.3]{macwilliams_theory_2007}) to obtain codes for a larger range of parameters. However, for our discussion, the application of the Gilbert-Varshamov bound is sufficient and further allows to prove the existence of codes whose size is quadratic in their minimum distance as follows.
\begin{lemma} \label{lemma:regular:distance:code} For any $M,D \in \mathbb{N}$ with $D \geq 10$ and $M \leq D^2$,
	$$ 	N(M,D) \leq \frac{2D}{1-2\sqrt{\ln(D)/D}}. $$ 
\end{lemma}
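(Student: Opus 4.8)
The plan is to derive this from the Gilbert--Varshamov bound of Lemma~\ref{lemma:irregular:distance:code:gv}, specialized to the regular case $[\D]_{ij}=D$. There the inner sum collapses to $(j-1)V(r,D-1)$ and is maximized at $j=M$, so
\[
N(M,D) \le \min_{r\in\mathbb{N}}\bigl\{r : 2^{r} > (M-1)V(r,D-1)\bigr\}.
\]
Since $M\le D^2$ we have $M-1 < D^2$, hence it suffices to exhibit a single integer $r \le \frac{2D}{1-2\sqrt{\ln(D)/D}}$ for which $2^{r} > D^2\,V(r,D-1)$; the minimum above is then automatically at most this $r$.

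Next I would estimate the Hamming-ball volume $V(r,D-1)$. The crude entropy bound $V(r,d)\le 2^{rH(d/r)}$ is useless here, because in the regime of interest $r\approx 2D$ the radius $D-1$ is essentially $r/2$, where $H\approx 1$. Instead I would write $V(r,D-1)=2^{r}\Pr[X\le D-1]$ for $X\sim\mathrm{Bin}(r,\tfrac12)$ and apply a Hoeffding/Chernoff deviation bound around the mean $r/2$, giving
\[
V(r,D-1) \;\le\; 2^{r}\exp\!\Bigl(-\tfrac{2}{r}\bigl(\tfrac r2-(D-1)\bigr)^{2}\Bigr),
\]
valid whenever $D-1\le r/2$, which holds throughout since our target $r$ exceeds $2D$. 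This captures the Gaussian-type decay in the deviation $\tfrac r2-(D-1)$ that the entropy bound misses.

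Substituting this into $2^{r} > D^2 V(r,D-1)$ and taking logarithms, the condition reduces to the clean inequality
\[
\bigl(\tfrac r2-(D-1)\bigr)^{2} > r\ln D .
\]
Writing $\epsilon = 2\sqrt{\ln(D)/D}$ and evaluating at $r^\star=\frac{2D}{1-\epsilon}$, one computes $\tfrac{r^\star}{2}-(D-1)=\frac{D\epsilon}{1-\epsilon}+1$, so the left-hand side exceeds $\frac{D^2\epsilon^2}{(1-\epsilon)^2}$ while the right-hand side equals $\frac{2D\ln D}{1-\epsilon}$. Using $D\epsilon^2=4\ln D$, the required inequality becomes $\frac{4\ln D}{1-\epsilon}\ge 2\ln D$, which holds with a comfortable factor-of-two margin. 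The hypothesis $D\ge 10$ enters only to guarantee $\epsilon<1$ (equivalently $4\ln D < D$), so that $r^\star>0$ and all estimates are meaningful.

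The one genuinely fiddly point --- and the part I expect to be the main obstacle --- is integrality: the bound I actually need is $N(M,D)\le r^\star$ with $r^\star$ real, so I must verify the volume inequality at the \emph{integer} $\lfloor r^\star\rfloor$ rather than at $r^\star$ itself. Here the factor-of-two slack above does the work: the function $g(r)=(r/2-(D-1))^2-r\ln D$ is a convex parabola whose vertex sits at $2D-2+2\ln D$, which for $D\ge 10$ lies well to the left of $r^\star-1$ (since $4\sqrt{D\ln D}\gg 2\ln D$); hence $g$ is increasing on $[r^\star-1,r^\star]$ and $g(\lfloor r^\star\rfloor)\ge g(r^\star-1)>0$ by the same computation. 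This yields an integer $r=\lfloor r^\star\rfloor\le r^\star$ satisfying the volume condition, and the Gilbert--Varshamov bound then gives $N(M,D)\le r^\star$ as claimed.
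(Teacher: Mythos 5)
Your proof is correct and takes essentially the same route as the paper's: the Gilbert--Varshamov bound of Lemma~\ref{lemma:irregular:distance:code:gv} specialized to constant distance, the Hoeffding-type volume bound $V(r,D-1)\le 2^{r}\mathrm{e}^{-2r(\frac12-\frac{D-1}{r})^{2}}$ (which the paper takes from \cite[Lemma 4.7.2]{ash}), and the substitution $r\approx 2D/(1-2\sqrt{\ln(D)/D})$. If anything, your write-up is slightly more careful than the paper's: you parametrize $\epsilon$ directly in terms of $D$ rather than solving the implicit equation $\epsilon=\sqrt{\ln(r)/r}$ and converting via monotonicity of $\ln(x)/x$, and you verify the volume condition at the integer $\lfloor r^\star\rfloor$ explicitly, a point the paper leaves implicit.
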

The proof of Lemma \ref{lemma:regular:distance:code} is obtained using Lemma \ref{lemma:irregular:distance:code:gv} together with \cite[Lemma 4.7.2]{ash_information_1990} and is presented in Appendix~\ref{app:proof_lemma4}.
This result means that, given that the size of the code is moderate, i.e., $M\leq D^2$, for large $D$, the optimal length of an error-correcting code approaches $2D$. While Lemma \ref{lemma:regular:distance:code} gives a slightly weaker bound than Lemma \ref{lemma:regular:distance:hadamard}, it holds for any $D$ and for larger code sizes $M$. Note that a similar bound as in Lemma \ref{lemma:regular:distance:code} can easily be derived also for larger $M$, i.e., $M \leq D^m$, $m >2$, however $m=2$ is sufficient for the subsequent analysis. Denoting $D_{\max} = \max_{i,j} [\D]_{ij}$, with these bounds it is immediate that, if $M \leq D_{\max}^2$, it holds that $D_{\max} \leq N(\D) \leq 2D_{\max}/{(1-2\sqrt{\ln(D_{\max})/D_{\max}})}$.

In the following sections, we turn to discuss specific functions and give bounds on their optimal redundancy, which are tight in several cases. For several instances we additionally give explicit code constructions that can be encoded efficiently. The functions under discussion are locally binary functions, the Hamming weight function, the Hamming weight distribution function, the min-max function and a collection of discretized real-valued functions.

\section{Locally Binary Functions}\label{sec:locally_binary}
In the following we define a broad class of functions, called locally binary functions. We derive their optimal redundancy and show how it can be obtained using a simple explicit code construction. This class of functions is defined next. %
\begin{defn}
    The function ball of a function $f$ with radius $\rho$ around $\u\in\Z_2^k$ is defined by
    $$ B_f(\u,\rho) = \{f(\u'): \u' \in \Z_2^k \land d(\u,\u') \leq \rho\}.  $$
\end{defn}
Locally binary functions are defined as follows.
\begin{defn}
	A function $f:\Z_2^k\to \Im(f)$ is called a $\rho$-locally binary function, if for all $\u \in \Z_2^k$,
	$$ |B_f(\u,\rho)| \leq 2. $$
\end{defn}
Intuitively, a $\rho$-locally binary function is a function, where the function regions of all function values are well spread in the sense that each information word is close to only one region of another function value, see Fig.~\ref{fig:locally:binary}.
\begin{figure}
	\centering
	\begin{subfigure}[t]{0.3\textwidth}
		\begin{tikzpicture}[scale=1.35]
			\draw[opacity=0.5] (0,0) rectangle (4,4);
			\draw[pattern=north east lines,opacity=0.2] (0,0) -- (1,0) .. controls (0.5,0.7) .. (0,1) -- cycle;
			\draw[pattern=dots,opacity=0.2] (1,0) .. controls (0.5,0.7) .. (0,1) -- (0,3.1) .. controls (1.0,1.4) .. (2.5,0) -- cycle;
			\draw[pattern=north west lines,opacity=0.2] (0,3.1) .. controls (1.0,1.4) .. (2.5,0) -- (4,0) -- (4,4) -- (3,4) .. controls (2.8,2.7) and (1.6,2.6) .. (1.8,4) -- (0,4) -- (0,3.1);
			\draw[pattern=crosshatch dots,opacity=0.2] (3,4) .. controls (2.8,2.7) and (1.6,2.6) .. (1.8,4) -- cycle;
			\node at (0.3,0.3) {$f_1$};
			\node at (.8,1.2) {$f_2$};
			\node at (2,2) {$f_3$};
			\node at (2.3,3.5) {$f_4$};
			\node at (0.3,3.7) {$\Z_2^k$};
		\end{tikzpicture}
	\caption{Example of a locally-binary function for small $\rho$. The regions are well-separated and in each neighborhood, there exist only two distinct function values.}
	\end{subfigure}%
\hspace{1.2cm}
	\begin{subfigure}[t]{0.3\textwidth}
		\begin{tikzpicture}[scale=1.35]
			\draw[opacity=0.5] (0,0) rectangle (4,4);
			\draw[pattern=north east lines,opacity=0.2] (0,0) .. controls (1,1.5) .. (2,2) .. controls (2.2,1.2) .. (4,0) -- cycle;
			\draw[pattern=north west lines,opacity=0.2] (0,0) .. controls (1,1.5) .. (2,2) .. controls (2.1,3.2) .. (2,4) -- (0,4) -- cycle;
			\filldraw[pattern=dots,opacity=0.2] (2,2) .. controls (2.1,3.2) .. (2,4) -- (4,4) -- (4,0 ) .. controls (2.2,1.2) .. (2,2);
			\node at (2,0.5) {$f_1$};
			\node at (1,2.5) {$f_2$};
			\node at (3,2.5) {$f_3$};
			\node at (0.3,3.7) {$\Z_2^k$};
		\end{tikzpicture}
		\caption{Example of a \emph{non-}locally-binary function $f$. Information vectors $\u$ close to the intersection point in the middle have $|B_f(\u,\rho)|=3$. }
	\end{subfigure}%
	\caption{Visualization of (non-)locally binary functions. Shaded areas highlight function regions, i.e., $\{\u \in \Z_2^k : f(\u) = f_i \}$. }
	\label{fig:locally:binary}
\end{figure}
Note that by this definition, any binary function, i.e., $|\Im(f)|=2$, is also a $\rho$-locally binary function for arbitrary $\rho$. We can directly prove the following optimality.
\begin{lemma} \label{lemma:locally:binary}
	For any $2t$-locally binary function $f$,
	$$ r_f(k,t) = 2t. $$
\end{lemma}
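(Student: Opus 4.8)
The plan is to prove the two inequalities $r_f(k,t) \geq 2t$ and $r_f(k,t) \leq 2t$ separately, obtaining equality. The lower bound is immediate from the general theory already developed: since a $2t$-locally binary function with $\I \geq 2$ certainly has at least two distinct function values, Corollary~\ref{cor:binary} gives $r_f(k,t) \geq 2t$ without any further work. (If $\I = 1$ the function is constant and the statement degenerates; I would either assume $\I \geq 2$ or note that the interesting case is $\I \geq 2$.) So the entire content of the lemma lies in the matching upper bound $r_f(k,t) \leq 2t$.

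For the upper bound, the natural approach is to appeal to Theorem~\ref{thm:achievability}, which states $r_f(k,t) \leq N(\I,\D_f(t,f_1,\dots,f_\I))$, and then to show that the relevant irregular-distance code can in fact be built with length $2t$. The key structural fact I would extract from the $2t$-locally binary hypothesis is a lower bound on the pairwise function distances $d_f(f_i,f_j)$. Concretely, I would argue that if $f$ is $2t$-locally binary then \emph{the function ball of radius $2t$ around any $\u$ contains at most two values}; I want to leverage this to show that the distance matrix $\D_f(t,f_1,\dots,f_\I)$ is sparse in a controlled way, so that a simple explicit redundancy assignment suffices. The cleanest route is probably to give a direct explicit construction rather than invoking $N(\cdot,\cdot)$ abstractly: assign redundancy vectors taking values in $\{\mathbf{0},\mathbf{1}\} \subseteq \Z_2^{2t}$ (the all-zeros and all-ones words of length $2t$, which are at distance $2t$ apart), coloring the function values so that any two function values $f_i,f_j$ with $d_f(f_i,f_j) \leq 2t$ receive \emph{different} colors. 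Then for such a pair the appended redundancy contributes distance $2t$, and combined with $d(\u_i,\u_j)\geq 1$ this gives total distance $\geq 2t+1$; while for pairs with $d_f(f_i,f_j) \geq 2t+1$ the data part alone already supplies the required distance, so their redundancy colors are unconstrained.

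The crux is therefore showing that such a valid two-coloring exists, i.e., that the ``conflict graph'' on function values — with an edge between $f_i$ and $f_j$ whenever $d_f(f_i,f_j) \leq 2t$ — is bipartite, in fact properly $2$-colorable. This is exactly where the locally binary hypothesis does its work: I would show that whenever $f_i$ conflicts with $f_j$ (distance $\leq 2t$), there is some $\u$ whose radius-$2t$ ball sees both $f_i$ and $f_j$, and the constraint $|B_f(\u,2t)|\le 2$ forces these to be the \emph{only} two values visible locally, which should prevent odd cycles / force the conflict components to be edges or isolated structure admitting a proper $2$-coloring. I expect this bipartiteness argument to be the main obstacle: I would need to verify carefully that conflicts cannot chain up into an odd cycle, using that any vertex on a path of conflicting values is simultaneously within distance $2t$ of two different-valued preimages and hence, by the $|B_f(\u,2t)|\leq 2$ condition, cannot be within distance $2t$ of a third value. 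Once the proper $2$-coloring is in hand, the length-$2t$ assignment into $\{\mathbf 0,\mathbf 1\}$ completes the construction and yields $r_f(k,t)\leq 2t$, matching the lower bound.
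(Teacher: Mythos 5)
Your lower bound is exactly the paper's (Corollary~\ref{cor:binary}), but your upper bound has a genuine gap at precisely the step you flagged: the conflict graph you need to $2$-color is \emph{not} bipartite in general. Concretely, take $T \geq 4t+1$, $k+1 = 4T$, and $f(\u) = \Delta_T(\u) \bmod 3$, i.e., Hamming-weight bands of width $T$ colored cyclically $0,1,2,0$. Any radius-$2t$ ball spans a weight window of length $4t+1 \leq T$, hence meets at most two adjacent (and therefore differently colored) bands, so $f$ is $2t$-locally binary; same-colored bands are at weight distance $\geq 2T+1 > 2t$, so a ball never sees a value twice via two bands. Yet $d_f(0,1) = d_f(1,2) = d_f(2,0) = 1$, realized at the three band boundaries, so the conflict graph on the three function values is a triangle. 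Worse, no coloring trick can rescue your scheme: any encoder of the form $\Enc(\u) = (\u,\p(f(\u)))$ needs three redundancy vectors at pairwise distance $\geq 2t$, and the Plotkin argument of Lemma~\ref{lemma:irregular:distance:code:plotkin} (each column contributes at most $2$ to the sum of the three pairwise distances) forces length $\geq 3t > 2t$. So for this function the bound of Theorem~\ref{thm:achievability}, to which your whole construction is confined, is simply not tight.

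The paper's proof escapes this by letting the redundancy depend on $\u$ itself rather than only on $f(\u)$: it appends the $2t$-fold repetition of the single bit $\omega_{2t}(\u)$ indicating whether $f(\u) = \max B_f(\u,2t)$. This is a \emph{local} two-coloring of each border region, so the same function value may legitimately carry parity $0^{2t}$ near one border and $1^{2t}$ near another --- exactly what the triangle example requires. Decoding uses that $f(\u) \in B_f(\u',t) \subseteq B_f(\u,2t)$: either $|B_f(\u',t)| = 1$ and the value is determined, or $B_f(\u',t) = B_f(\u,2t)$, and a majority vote over the $2t+1$ bits $(\omega_{2t}(\u'),\p')$ recovers $\omega_{2t}(\u)$ (if all $t$ errors hit the parity, then $\u' = \u$ and the recomputed bit is correct; otherwise at most $t-1$ parity bits are wrong), which then selects the maximum versus the minimum of $B_f(\u',t)$. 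This local-indicator idea, together with the recomputed bit entering the majority, is the ingredient missing from your proposal.
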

\begin{proof}
	By Corollary \ref{cor:lower:bound}, $r_f(k,t) \geq 2t$. On the other hand, we can prove achievability using the following explicit code construction. Let $\Im(f) = \{f_1,\dots,f_\I\}$ and set w.l.o.g. $f_i\triangleq i$. Let $\u$ be the information word to be encoded and define the following function,
	$$ \omega_{2t}(\u) = \left\{ \begin{array}{ll}
		1, & \text{if } ~ f(\u) = \max B_f(\u,2t), \\
		0, & \text{otherwise.}
	\end{array} \right. $$

	Now, use $\Enc(\u) = (\u,(\omega_{2t}(\u))^{2t})$, i.e. the $2t$-fold repetition of the bit $\omega_{2t}(\u)$. This gives an \fcc~for the function $f$ due to the following. Assume $(\u,\p) =\Enc(\u)$ has been transmitted and $(\u',\p')$ has been received. The decoder first computes $B_f(\u',t)$. Notice that $f(\u) \in B_f(\u',t) \subseteq B_f(\u,2t)$. If $|B_f(\u',t)|=1$, then it trivially contains the correct function value $f(\u)$. Otherwise, $B_f(\u',t) = B_f(\u,2t)$, since $|B_f(\u',t)|>1$ and, by the definition of $2t$-locally binary functions, $|B_f(\u,2t)|\leq2$. The decoder performs a majority decision over the $2t+1$ bits $(\omega_{2t}(\u'),\p')$ and obtains correctly $\omega_{2t}(\u)$, as at most $t$ out of these $2t+1$ bits are erroneous. Finally, the receiver decides for $\max B_f(\u',t)$, if $\omega_{2t}(\u) = 1$ and for $\min B_f(\u',t)$, otherwise.
\end{proof}
It is noteworthy that the code construction used in Lemma~\ref{lemma:locally:binary} leverages the side information provided by the message $\u$ using $\omega_{2t}(\u')$ for decoding, which allows to achieve a redundancy of only $2t$. This side information is particularly useful for locally binary functions due to the structured topology of the function regions, which is visualized in Fig.~\ref{fig:locally:binary}. Ignoring this side information would require significantly more redundancy, cf. Table~\ref{tab:redundancy}.

In Section \ref{sec:hamming_d} we will present an explicit example of a locally binary function. For illustration, another example of a locally binary function is presented in the following. 
\begin{example}
Assume the codewords $\mathcal{Q} = \{\mathbf{q}_1,\dots,\mathbf{q}_N\} \subseteq \Z_2^k$ form a code of length $k$ with minimum distance $d=\min_{i\neq j} d(\mathbf{q}_i,\mathbf{q}_j)$. Then the indicator function 
$$\mathbb{I}_\mathcal{Q}(\u) = \begin{cases}
	i, & \text{if } \u=\mathbf{q}_i, \\
	0, & \text{otherwise}
\end{cases} $$
is $\lfloor\frac{d-1}{2}\rfloor$-locally binary.
\end{example}

\section{Functions Based on the Hamming Weight}\label{sec:hamming}
In this section we study two functions: the Hamming weight function $f(\u) = \wt(\u)$ and the Hamming weight distribution function $f(\u) = \Delta_T(\u) = \lfloor \frac{\wt(\u)}{T} \rfloor$, for a given threshold $T$.
\subsection{Hamming Weight Function}\label{sec:hamming_w}

Let $f(\u) = \wt(\u)$, where $\u \in \Z_2^k$. Note that the expressiveness of $\wt(\bullet)$ is $\I=|\Im(\wt)|=k+1$. We start by showing that for this function it is possible to achieve optimal redundancy by an encoding function which only depends on the function value, i.e., the Hamming weight of $\u$. Throughout this section we refer to the function distance matrix $\D_{\mathrm{wt}}(t,f_1,\dots,f_\I)$ as $\D_{\mathrm{wt}}(t)$ for ease of notation. %
\begin{lemma} \label{lemma:optimal:hamming:weight:function}
	Let $f(\u) = \wt(\u)$. Consider the $(k+1)\times(k+1)$ matrix $\D_{\wt}(t)$ with entries $[\D_{\wt}(t)]_{ii} = 0$ and $[\D_{\wt}(t)]_{ij} = [2t+1-|i-j|]^+$ for $i \neq j$. Then,
	$$ r_{\wt}(k,t) = N(\D_{\wt}(t)). $$
\end{lemma}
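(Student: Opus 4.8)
The plan is to sandwich $r_{\wt}(k,t)$ between matching lower and upper bounds, both equal to $N(k+1,\D_{\wt}(t))$, by recognizing $\D_{\wt}(t)$ as the function-distance matrix of $\wt$ and then realizing those function distances exactly by an explicit set of representatives. Since the Hamming weight takes the $\I = k+1$ distinct values $0,1,\dots,k$, I first compute the function distance $d_f(i,j)$ between weights $i$ and $j$. For any $\u_1,\u_2 \in \Z_2^k$ each single coordinate flip changes the weight by at most one, so $d(\u_1,\u_2) \geq |\wt(\u_1)-\wt(\u_2)|$, giving $d_f(i,j) \geq |i-j|$; conversely equality is attained, e.g.\ by flipping $|i-j|$ coordinates of a weight-$i$ vector to reach weight $j$. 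Hence $d_f(i,j) = |i-j|$, and by the definition of the function-distance matrix its off-diagonal entries are exactly $[2t+1-|i-j|]^+$, i.e.\ $\D_f(t,f_1,\dots,f_{k+1}) = \D_{\wt}(t)$. Theorem~\ref{thm:achievability} then immediately yields the upper bound $r_{\wt}(k,t) \leq N(k+1,\D_{\wt}(t))$.

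For the matching lower bound, I would exhibit representatives realizing these function distances with equality and invoke Corollary~\ref{cor:lower:bound}. Take the nested ``prefix'' vectors $\u_i = (1^{i},0^{k-i})$ for $i = 0,1,\dots,k$, so that $\wt(\u_i) = i$ and the full image is covered. Because (assuming $i<j$) the vectors $\u_i$ and $\u_j$ agree on the first $i$ and the last $k-j$ coordinates and differ precisely on the coordinates indexed between $\min(i,j)+1$ and $\max(i,j)$, one has $d(\u_i,\u_j) = |i-j|$ for all $i \neq j$. Consequently the distance-requirement matrix of this representative set, $\D_f(t,\u_0,\dots,\u_k)$, has off-diagonal entries $[2t+1-d(\u_i,\u_j)]^+ = [2t+1-|i-j|]^+$, coinciding with $\D_{\wt}(t)$. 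Corollary~\ref{cor:lower:bound} with $M = k+1$ then gives $r_{\wt}(k,t) \geq N(k+1,\D_{\wt}(t))$, and combining the two bounds proves the claim.

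The only substantive step is verifying $d_f(i,j) = |i-j|$ together with the \emph{simultaneous} exact realizability of all these distances by a single representative set; everything else is a direct instantiation of Theorem~\ref{thm:achievability} and Corollary~\ref{cor:lower:bound}. I do not expect a genuine obstacle: the weight-difference bound handles the lower direction of $d_f$ and the prefix construction handles the upper. The crucial point that makes the lemma an exact equality rather than a pair of non-matching bounds is that the same nested family $\u_0,\u_1,\dots,\u_k$ attains equality in \emph{every} pair at once, its pairwise distances being additive along the chain; this is exactly what forces the lower-bound matrix to \emph{equal} the function-distance matrix instead of merely being dominated by it.
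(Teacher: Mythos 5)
Your proposal is correct and matches the paper's own proof essentially step for step: it establishes $d_{\wt}(i,j)=|i-j|$ and applies Theorem~\ref{thm:achievability} for the upper bound, then uses the same prefix representatives $\u_i=(1^i0^{k-i})$ with pairwise distances $|i-j|$ together with Corollary~\ref{cor:lower:bound} for the matching lower bound. Your added remark that the single nested family attains all function distances simultaneously is exactly the (implicit) reason the paper's two bounds coincide, so there is nothing to flag.
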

\begin{proof}
	The function values of the Hamming weight function belong to $\Im(\wt) = \{0,1,\dots,k\}$ and we let $i,j \in \{0,1,\dots,k\}$ denote two function values. First, we see that the function distance is given by $d_{\wt}(i,j) = |i-j|$, since
	\begin{align*}
	    \min_{\u_1,\u_2 \in \Z_2^k}  d(\u_1,\u_2) ~\mathrm{s.t.}~ \wt(\u_1) = i, \wt(\u_2) = j
	\end{align*}
	is equal to $|i-j|$. It follows from Theorem \ref{thm:achievability} that $r_{\wt}(k,t) \leq N(\D_{\wt}(t))$. On the other hand, using $\u_i = (1^{i}0^{k-i})$, \mbox{$i \in \{0,1,\dots,k\}$}, we see that $\wt(\u_i) = i$ and their pairwise distances are $d(\u_i,\u_j) = |i-j|$. We can then apply Corollary~\ref{cor:lower:bound} to obtain $r_{\wt}(k,t) \geq N(\D_{\wt}(t))$.
\end{proof}
The following example visualizes the general structure of the function distance matrix $\D_{\wt}(t)$.
\begin{example}
	The function distance matrix $\D_{\wt}(2)$ for $k=6$ is given by the symmetric $7 \times 7$ matrix with entries
	$$ \D_{\wt}(2) = \begin{blockarray}{c|@{\hspace{8pt}}ccccccc}
		$f(\u)$ & 0 & 1 & 2 & 3 & 4 & 5 & 6 \\ \cline{1-8}
		\begin{block}{c|@{\hspace{8pt}}(ccccccc)}
			0&\cellcolor{gray!15}0&4&3&2&1&0&0\\
			1&\textcolor{gray}{4}&\cellcolor{gray!15}0&4&3&2&1&0\\
			2&\textcolor{gray}{3}&\textcolor{gray}{4}&\cellcolor{gray!15}0&4&3&2&1\\
			3&\textcolor{gray}{2}&\textcolor{gray}{3}&\textcolor{gray}{4}&\cellcolor{gray!15}0&4&3&2\\
			4&\textcolor{gray}{1}&\textcolor{gray}{2}&\textcolor{gray}{3}&\textcolor{gray}{4}&\cellcolor{gray!15}0&4&3\\
			5&\textcolor{gray}{0}&\textcolor{gray}{1}&\textcolor{gray}{2}&\textcolor{gray}{3}&\textcolor{gray}{4}&\cellcolor{gray!15}0&4\\
			6&\textcolor{gray}{0}&\textcolor{gray}{0}&\textcolor{gray}{1}&\textcolor{gray}{2}&\textcolor{gray}{3}&\textcolor{gray}{4}&\cellcolor{gray!15}0\\
		\end{block}
	\end{blockarray} .$$
\end{example}
Based on Lemma \ref{lemma:optimal:hamming:weight:function}, we can infer a lower bound on the redundancy using the Plotkin-like bound of Lemma \ref{lemma:irregular:distance:code:plotkin}. %
\begin{corollary} \label{cor:hamming:weight:lower:bound} For any $k>t$,
	$$ r_{\wt}(k,t) \geq \frac{10t^3+30t^2+20t+12}{3t^2+12t+12}. $$
\end{corollary}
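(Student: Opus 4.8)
The plan is to combine the exact characterization in Lemma~\ref{lemma:optimal:hamming:weight:function} with the generalized Plotkin bound of Lemma~\ref{lemma:irregular:distance:code:plotkin}, applied not to the full $(k+1)\times(k+1)$ matrix $\D_{\wt}(t)$ but to a carefully chosen small submatrix. Restricting is essential: the band of $\D_{\wt}(t)$ has $\Theta(k)$ nonzero entries on each of its $2t$ off-diagonals, so $\sum_{i<j}[\D_{\wt}(t)]_{ij}$ grows only linearly in $k$ while $M^2=(k+1)^2$ grows quadratically, and the Plotkin ratio would vanish as $k\to\infty$. Instead I would invoke Corollary~\ref{cor:lower:bound} with the nested representatives $\u_i=(1^{i}0^{k-i})$ restricted to the weights $\{0,1,\dots,t+1\}$, i.e.\ a block of $M=t+2$ consecutive weights. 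This is exactly why the hypothesis $k>t$ is needed: it guarantees these $t+2$ weights all occur. On this block $d(\u_i,\u_j)=|i-j|\le t+1\le 2t$, so the truncation $[\cdot]^+$ is never active and every off-diagonal entry of the submatrix equals $2t+1-|i-j|>0$.

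Next I would evaluate the upper-triangular sum of this $(t+2)\times(t+2)$ banded Toeplitz matrix by grouping pairs according to their diagonal offset $\delta=j-i$. There are $t+2-\delta$ pairs at offset $\delta$, each contributing $2t+1-\delta$, so
\begin{align}
\sum_{i<j}[\D]_{ij}=\sum_{\delta=1}^{t+1}(t+2-\delta)(2t+1-\delta)=\frac{5t(t+1)(t+2)}{6},\nonumber
\end{align}
where the closed form follows from the standard identities for $\sum\delta$ and $\sum\delta^2$. Substituting $M=t+2$ and this sum into Lemma~\ref{lemma:irregular:distance:code:plotkin} and simplifying gives
\begin{align}
r_{\wt}(k,t)\ \ge\ \frac{4}{(t+2)^2}\cdot\frac{5t(t+1)(t+2)}{6}=\frac{10t(t+1)}{3(t+2)}=\frac{10t^3+30t^2+20t}{3(t+2)^2},\nonumber
\end{align}
which agrees with the claimed bound up to a lower-order additive $O\!\big(1/(t+2)^2\big)$ term in the numerator and reproduces the announced asymptotics $\tfrac{10}{3}(t-1)$.

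The step that needs the most care is the parity of $M=t+2$, since Lemma~\ref{lemma:irregular:distance:code:plotkin} carries the constant $\tfrac{4}{M^2}$ for even $M$ and $\tfrac{4}{M^2-1}$ for odd $M$, and $M=t+2$ flips parity with $t$. To get a single formula valid for all $t$ I would observe that the even-case inequality $N(M,\D)\ge \tfrac{4}{M^2}\sum_{i<j}[\D]_{ij}$ actually holds for \emph{every} $M$: for odd $M$ each column of the stacked codeword matrix contributes at most $\tfrac{M^2-1}{4}<\tfrac{M^2}{4}$, so the odd bound is only sharper. Using the sharper odd constant for odd $t$ (together with the integrality of $r$) is what would pin down the exact stated numerator constant, and tracking this is the fiddly part of the computation. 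A secondary, more conceptual point worth noting is the choice of window size: $M=t+2$ is the balance point between the cubic growth of the triangular sum and the quadratic Plotkin denominator, and while a shorter window of size $\Theta(\sqrt{t})$ would actually push the leading constant toward $4t$, the value $M=t+2$ is what yields the clean closed form recorded in the corollary.
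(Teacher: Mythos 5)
Your setup coincides with the paper's: restrict to the $t+2$ nested representatives $\u_i=(1^i0^{k-i})$, $i\in\{0,\dots,t+1\}$ (whence $k>t$), observe that the clipping $[\cdot]^+$ is inactive so every off-diagonal entry of the $(t+2)\times(t+2)$ submatrix equals $2t+1-|i-j|$, and apply the irregular Plotkin bound of Lemma~\ref{lemma:irregular:distance:code:plotkin} with the uniform constant $\tfrac{4}{(t+2)^2}$; your remark that the even-$M$ constant is valid for every $M$ is correct and is exactly what the paper uses tacitly. Your triangular sum $\tfrac{5t(t+1)(t+2)}{6}$ is also correct. But this yields only $\tfrac{10t^3+30t^2+20t}{3(t+2)^2}$, short of the stated bound by $\tfrac{12}{3(t+2)^2}$, and your proposed repair---the sharper odd-$M$ constant plus integrality of $r$---does not close the gap for all $t$. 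Concretely, for $t=2$ we have $M=4$ even, so no odd constant is available, and your bound evaluates to exactly $5$, an integer, so integrality gives nothing; yet the corollary asserts $r_{\wt}(k,2)\geq 252/48=5.25$. The same failure recurs at $t=8$ (your bound is exactly $24$ versus the claimed $24.04$) and $t=18$: whenever $t+2$ divides $20$ the quantity $\tfrac{10t(t+1)}{3(t+2)}$ can be an integer and the rounding argument is vacuous.

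The missing idea is a parity argument on \emph{triples of codewords}, not on $M$. For binary vectors, $d(\p_1,\p_2)+d(\p_1,\p_3)+d(\p_2,\p_3)$ is always even, since $d(\p_i,\p_j)\equiv \wt(\p_i)+\wt(\p_j)\pmod 2$ and the three terms sum to $2(\wt(\p_1)+\wt(\p_2)+\wt(\p_3))\equiv 0$. The distance requirements on the first three codewords sum to $[\D_{\wt}(t)]_{12}+[\D_{\wt}(t)]_{13}+[\D_{\wt}(t)]_{23}=2t+(2t-1)+2t=6t-1$, which is odd, so the realized sum of these three distances must be at least $6t$---one more than required. Inserting this extra $+1$ into the Plotkin sum gives $r_{\wt}(k,t)\geq \tfrac{4}{(t+2)^2}\bigl(1+\tfrac{5t(t+1)(t+2)}{6}\bigr)=\tfrac{10t^3+30t^2+20t+12}{3t^2+12t+12}$ exactly, uniformly in $t$ and with no case split on parity or integrality. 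Your closing observation about window size is, incidentally, sound: optimizing the same computation over $M\approx\sqrt{6t}$ gives a bound of roughly $4t-O(\sqrt{t})$, asymptotically stronger than the corollary; but it does not produce the specific constant the statement asserts, which is what must be proved here.
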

\begin{proof}
	Let $\{\p_1,\dots,\p_{k+1}\}$ be a $\D_{\wt}(t)$-code. We will prove the corollary by applying the Plotkin-type bound on a subcode of $\p_1,\dots,\p_{k+1}$. Consider the first $t+2$ codewords $\p_1,\dots,\p_{t+2}$. By Lemma \ref{lemma:optimal:hamming:weight:function}, we have that $[\D_{\wt}(t)]_{ij} = 2t+1-|i-j|$ and thus $[\D_{\wt}(t)]_{12}+[\D_{\wt}(t)]_{13}+[\D_{\wt}(t)]_{23} = 6t-1$. However, since $d(\p_1,\p_2)+d(\p_1,\p_3)+d(\p_2,\p_3)$ must be an even value, it follows that $d(\p_1,\p_2)+d(\p_1,\p_3)+d(\p_2,\p_3) \geq 6t$. With this strengthened bound, the sum of the pairwise distances in Lemma \ref{lemma:irregular:distance:code:plotkin} can be increased by one and we obtain
	\begin{align*}
		r_{\wt}(k,t) &\overset{(a)}{\geq} \frac{4}{(t+2)^2} \left(1+ \sum_{i=1}^{t+2}\sum_{j=i+1}^{t+2} [\D_{\wt}(t)]_{ij} \right) \\
		&\overset{(b)}{=} \frac{4}{(t+2)^2} \left(1+\sum_{i=0}^{t} (t+1-i)(2t-i)\right) \\
		&= \frac{10t^3+30t^2+20t+12}{3t^2+12t+12}.
	\end{align*}
	Hereby, inequality $(a)$ follows from Lemma \ref{lemma:irregular:distance:code:plotkin}, with an additional summand of $1$ due to the fact that $d(\p_1,\p_2)+d(\p_1,\p_3)+d(\p_2,\p_3)$ must be even, as explained above. Eq. $(b)$ follows from summing over the diagonals of $\D_{\wt}(t)$.
\end{proof}
For the following results, we require the \emph{shifted} modulo function, which is defined as follows.
\begin{defn} \label{def:smod}
	We define the shifted modulo operator by
	$$a \smod b \triangleq ((a-1) \bmod b) +1 \in \{1,2,\dots,b\}.$$
\end{defn}
E.g., \mbox{$(3 \smod 3) = 3$} and $(4 \smod 3) = 1$. We now describe a construction of an FCC for the function $\wt(\u)$.
\begin{construction}
	We define
	$$\Enc_{\wt}(\u) = (\u,\p_{\wt(\u)+1}),$$
	where the $\p_i$'s are defined depending on $t$ as follows.
	
	For $t=1$, set $\p_1 = (000)$, $\p_2 = (110)$ and $\p_3 = (011)$. Then set $\p_i = \p_{i\smod3}$ for $i\geq4$.
	
	For $t=2$ set $\p_1 = (000000)$, $\p_2 = (110011)$, $\p_3 = (001111)$, $\p_4 = (111100)$. Then set $\p_i=\p_{i-4}+(000001)$ for $i \in \{5,6,7,8\}$ and $\p_i=\p_{i\smod8}$ for $i\geq 9$.
	
	For $t \geq 3$, let $\p_1,\dots,\p_{2t+1}$ be a code with minimum distance $2t$, i.e., $d(\p_i,\p_j) \geq 2t$ for all $i,j\leq 2t+1$, $i\neq j$ and set $\p_i = \p_{i\smod (2t+1)}$ for $i \geq 2t+2$.

\end{construction}
We can use Corollary \ref{cor:hamming:weight:lower:bound} to narrow down the optimal redundancy of \fccs~for the Hamming weight function as follows.
\begin{lemma} \label{lemma:hamming:weight:fct}
	For any $k>2$, $r_{\wt}(k,1) = 3$ and $r_{\wt}(k,2) = 6$. Further, for $t \geq 5$ and $k>t$,
	$$ \frac{10t}{3}-\frac{10}{3} \leq r_{\wt}(k,t) \leq \frac{4t}{1-2\sqrt{\ln(2t)/(2t)}}. $$
\end{lemma}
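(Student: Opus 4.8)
The plan is to derive all three statements by combining the lower bound of Corollary~\ref{cor:hamming:weight:lower:bound} with the explicit encoder $\Enc_{\wt}$ from the preceding construction, which supplies the matching upper bounds via Lemma~\ref{lemma:optimal:hamming:weight:function}. For the two exact values I would argue that the (integer-valued) redundancy is squeezed between these two bounds, and for the range $t\geq5$ I would show that the rational lower bound already dominates $\frac{10}{3}(t-1)$ while the construction's length is controlled by the Gilbert--Varshamov--type estimate of Lemma~\ref{lemma:regular:distance:code}.

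For the lower bounds I would simply evaluate Corollary~\ref{cor:hamming:weight:lower:bound}. At $t=1$ it gives $r_{\wt}(k,1)\geq \frac{72}{27}=\frac{8}{3}$, and since the redundancy is a nonnegative integer this forces $r_{\wt}(k,1)\geq 3$; at $t=2$ it gives $r_{\wt}(k,2)\geq \frac{252}{48}=\frac{21}{4}$, hence $r_{\wt}(k,2)\geq 6$. For the general lower bound I would verify that
\begin{equation*}
\frac{10t^3+30t^2+20t+12}{3t^2+12t+12}\ \geq\ \frac{10}{3}(t-1),
\end{equation*}
which, after clearing the positive denominators, reduces to $60t+156\geq 0$, true for all $t\geq 0$; this yields $r_{\wt}(k,t)\geq \frac{10t}{3}-\frac{10}{3}$.

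For the upper bounds I would show that $\Enc_{\wt}$ produces a valid $[k+1,\D_{\wt}(t)]$ code of the claimed length. In the general case $t\geq 3$, writing $a=\wt(\u_1)$, $b=\wt(\u_2)$ with $\delta=|a-b|\geq1$, the weight-difference bound gives $d(\u_1,\u_2)\geq\delta$. If $\delta\geq 2t+1$ the required distance $2t+1$ is already met by the data alone; if $1\leq\delta\leq 2t$ then $a+1$ and $b+1$ are not congruent modulo $2t+1$, so $\Enc_{\wt}$ selects two distinct base codewords at distance $\geq 2t$, whence $d(\Enc_{\wt}(\u_1),\Enc_{\wt}(\u_2))\geq \delta+2t\geq 2t+1$. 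The length then equals $N(2t+1,2t)$, and for $t\geq 5$ (so that $2t\geq10$ and trivially $2t+1\leq(2t)^2$) Lemma~\ref{lemma:regular:distance:code} bounds it by $\frac{4t}{1-2\sqrt{\ln(2t)/(2t)}}$, giving the stated inequality. The cases $t=1$ and $t=2$ I would settle by a direct finite check that the explicitly listed periodic codewords meet every entry of $\D_{\wt}(t)$; periodicity is harmless because the requirement $[2t+1-|i-j|]^+$ vanishes once $|i-j|\geq 2t+1$, so only index gaps up to $2t$ impose constraints, and these collapse to checking the cyclic-distance pattern of the $3$ (resp.\ $8$) listed vectors.

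The main obstacle is the case $t=2$. The naive choice of period $2t+1=5$ with a minimum-distance-$2t$ base code cannot reach redundancy $6$, since Corollary~\ref{cor:plotkin:regular} forces $N(5,4)\geq 2\cdot4\cdot\tfrac{4}{5}=6.4$, i.e.\ at least $7$. Attaining the optimal value $6$ therefore requires the ad hoc period-$8$ design in which consecutive groups of four are related by the shift $+(000001)$, and the work is to verify by hand that the four cyclic-distance classes attain the required minimum distances $4,3,2,1$, respectively. Once this finite verification is complete, combining it with the $t=2$ lower bound yields $r_{\wt}(k,2)=6$, and the analogous but easier check settles $r_{\wt}(k,1)=3$.
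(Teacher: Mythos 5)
Your proposal is correct and follows essentially the same route as the paper: lower bounds by evaluating Corollary~\ref{cor:hamming:weight:lower:bound} (with integrality rounding $8/3\to3$ and $21/4\to6$ for $t=1,2$, and the algebraic comparison with $\frac{10}{3}(t-1)$ in general), and upper bounds by verifying the periodic construction against the entries $[\D_{\wt}(t)]_{ij}=[2t+1-|i-j|]^+$, with Lemma~\ref{lemma:regular:distance:code} bounding $N(2t+1,2t)$ for $t\geq5$. Your added observation that Corollary~\ref{cor:plotkin:regular} forces $N(5,4)\geq 7$, which explains why the $t=2$ case requires the ad hoc period-$8$ design rather than a period-$5$ one, is a nice clarification that the paper leaves implicit.
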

\begin{proof}
	We start with the case $t=1$. It is quickly verified that $d(\p_i,\p_j) \geq [\D_{\wt}(1)]_{ij}$ for all $i\neq j$, $i,j\leq k+1$ and thus giving a valid \fcc. Further, Corollary \ref{cor:hamming:weight:lower:bound} gives $r_{\wt}(k,1) \geq 3$.
	For the case $t=2$, it can be verified that $d(\p_i,\p_j) \geq [\D_{\wt}(2)]_{ij}$. Again, Corollary \ref{cor:hamming:weight:lower:bound} gives $r_{\wt}(k,1) \geq 6$, proving optimality of the proposed code.
	For $t \geq 3$, we obtain $d(\p_i,\p_j) \geq [\D_{\wt}(t)]_{ij}$ as desired. The lower and upper bound on $r_{\wt}(k,t)$ follow from Corollary \ref{cor:hamming:weight:lower:bound} and Lemma \ref{lemma:regular:distance:code}.
\end{proof}
Recall here that using a standard error-correcting code with minimum distance $2t+1$, e.g., a BCH code, results in a redundancy of roughly $t \log k$. Therefore, using \fccs, we can improve the scaling of the redundancy by a factor of $\log k$. While we find the optimal redundancy exactly for $t=1$ and $t=2$, there is still a gap for $t \geq 3$ narrowing down the optimal redundancy between roughly $\frac{10t}{3}$ and $4t$.

\subsection{Hamming Weight Distribution Function}\label{sec:hamming_d}
Let in the following $T \in \mathbb{N}$ be a parameter of choice. For simplicity, we restrict $T$ to divide $k+1$. Consider the function $f(\u) = \Delta_T(\u) \triangleq \lfloor \frac{\wt(\u)}{T} \rfloor$. We directly see that the number of distinct function values is equal to $\I = \frac{k+1}{T}$. This function defines a step threshold function, based on the Hamming weight of $\u$, with $\I-1$ steps. The threshold values, where the function values increase by one, are at integer multiples of $T$, see Fig. \ref{fig:wt:distr}.
\pgfplotsset{compat=newest}
\begin{figure}
	\centering
	\begin{tikzpicture}
		\begin{axis}[set layers,
			xlabel={$\wt(\u)$},
			ylabel={$\Delta_T(\u)$},
			xmin=0,xmax=6,
			ymin=0,ymax=5.5,
			height=6cm,
			xtick={0,1,2,3,4,5,6},
			ytick={0,1,2,3,4,5},
			xticklabels={0,$T$, $2T$, $\dots$ ,, , $k$},
			yticklabels={0,1,2,$\vdots$, $E-2$, $E-1$},
			set layers,
			minor x tick num=4,
			xminorticks=true,
			x tick label style={yshift={(\ticknum==3)*-0.5em},xshift={(\ticknum==6)*-0.5em}},]
			\node[] at (axis cs: 3.25,2.8) {$\iddots$};

			\addplot[mark=*, only marks] coordinates {(0,0) (0.2,0) (0.4,0) (0.6,0) (0.8,0) (1,1) (1.2,1) (1.4,1) (1.6,1) (1.8,1) (2,2) (2.2,2) (2.4,2) (2.6,2) (2.8,2) (4,4) (4.2,4) (4.4,4) (4.6,4) (4.8,4) (5,5) (5.2,5) (5.4,5) (5.6,5) (5.8,5)};

			\pgfplotsinvokeforeach {1,2,3,4,5}{
				\pgfonlayer{axis grid}
				\draw[gray!50](#1,0|-current axis.south)--(#1,0|-current axis.north);
				\endpgfonlayer}
			\pgfplotsinvokeforeach {1,2,3,4,5}{
				\pgfonlayer{axis grid}
				\draw[gray!50](0,#1-|current axis.west)--(0,#1-|current axis.east);
				\endpgfonlayer
			}
		\end{axis}
		
	\end{tikzpicture}
	\caption{Illustration of the Hamming weight distribution function with step size $T = \frac{k+1}{\I}$.}
	\label{fig:wt:distr}
\end{figure}
We restrict to the case where $2t+1 \leq T$ and will give an optimal construction with redundancy $r_{\Delta_T}(k,t) = 2t$ in this regime. First, note that, when $4t+1 \leq T$, we can show that $\Delta_T(\u)$ is $2t$-locally binary, as two consecutive thresholds have distance at least $4t+1$. Consequently, $r_{\Delta_T}(k,t) = 2t$ by Lemma \ref{lemma:locally:binary}. We now focus on the more general case, where $2t+1 \leq T$. We start by describing the encoding function. Recall the shifted modulo operation from Definition~\ref{def:smod}.
\begin{construction}\label{cons:fcc:Hwd} We define
	$$\Enc_{\Delta_T}(\u) = (\u,\p_{\wt(\u)}),$$
	with $\p_i \in \Z_2^{2t}$ defined as follows. Set $\p_i = (1^{i-1}0^{2t-i+1})$ for $i \in [2t+1]$, $\p_i = (1^{2t})$ for $i \in \{2t+2,\dots,T\}$ and $\p_i = \p_{i\smod T}$, if $i \geq T+1$.
\end{construction}
We show that this encoding function gives an {\fcc} for the Hamming weight distribution function $\Delta_T(\u)$.
\begin{lemma}\label{lemma:optimal:fcc:Hwd}
	For any $k,t,T\in \mathbb{N}$ such that $T$ divides $(k+1)$ and $2t+1\leq T$, 
	$$r_{\Delta_T}(k,t) = 2t.$$
\end{lemma}
\begin{proof}
	By Corollary~\ref{cor:lower:bound}, $r_{\Delta_T}(k,t) \geq 2t$. We now argue that Construction \ref{cons:fcc:Hwd} is an {\fcc} of redundancy $2t$ by showing that $d(\Enc_{\Delta_T}(\u_1),\Enc_{\Delta_T}(\u_2))\geq 2t+1$ for all $\u_1,\u_2 \in \Z_2^k$ with $f(\u_1) \neq f(\u_2)$. Let $\u_1,\u_2 \in \Z_2^k$ with $f(\u_1) \neq f(\u_2)$ be two information vectors that evaluate to two different function values. Note that, if $d(\u_1,\u_2) \geq 2t+1$, we automatically have $d(\Enc_{\Delta_T}(\u_1),\Enc_{\Delta_T}(\u_2))\geq 2t+1$ and we therefore restrict to the interesting case $d(\u_1,\u_2) <2t+1$. Since $f(\u_1) \neq f(\u_2)$ and $T\geq 2t+1$ we can therefore assume w.l.o.g. that $f(\u_1) = m-1$ and $f(\u_2)=m$ for some $m \in \mathbb{N}$.
	
	We will prove the lemma for $T=2t+1$ first. In this case, the parity vectors $\p_i$ in the two function regions are illustrated in Table~\ref{tab:parity:vectors:threshold}.
	\begin{table}
		\caption{Parity vectors of the Hamming weight distribution function for $T=2t+1$.}
		\begin{center}\setlength{\tabcolsep}{12.5pt}\renewcommand{\arraystretch}{1.15}
			\begin{tabular}{ccc}
				$\wt(\u)$ & $\p_{\wt(\u)+1}$ & $f(\u)$ \\ \specialrule{.4pt}{0pt}{0pt}
				$mT-T$ & $(000\dots000)$ & \multirow{5}{*}{$m-1$} \\[-2pt]
				$mT-T+1$ & $(100\dots000)$ & \\[-2pt]
				$\vdots$& $\vdots$ & \\[1pt]
				$mT-2$ & $(111\dots110)$ &  \\
				$mT-1$ & $(111\dots111)$ & \\ \specialrule{.4pt}{0pt}{0pt}
				$mT$ & $(000\dots000)$ & \multirow{3}{*}{$m$} \\
				$mT+1$ & $(100\dots000)$ & \\[-2pt]
				$\vdots$& $\vdots$ &
			\end{tabular}
		\end{center}
	\label{tab:parity:vectors:threshold}
	\end{table}
	Let $\wt(\u_1) = (m-1)T+w_1$ and $\wt(\u_2) = mT+w_2$ with $w_1,w_2 \in \{0,1,\dots,T-1\}$. The corresponding parity vectors are $\p(\u_1) = \p_{w_1+1} = (1^{w_1}0^{2t-w_1})$ and $\p(\u_2) = \p_{w_2+1} = (1^{w_2}0^{2t-w_2})$. Using $d(\p_{w_1+1},\p_{w_2+1}) = |w_1-w_2|$, it follows that $d(\Enc_{\Delta_T}(\u_1),\Enc_{\Delta_T}(\u_2)) = d(\u_1,\u_2) + d(\p_{w_1+1},\p_{w_2+1}) \geq \wt(\u_2)-\wt(\u_1) + |w_1-w_2| = T-w_1+w_2+|w_1-w_2|\geq T=2t+1$. The case $T>2t+1$ is proven similarly using that
	$$ d(\p_{w_1+1},\p_{w_2+1}) = \begin{cases}
		|w_1-w_2|, & \text{ if } w_1 \leq 2t \land w_2 \leq 2t, \\
		[2t-w_2]^+, & \text{ if } w_1 > 2t,  \\
		[2t-w_1]^+, & \text{ if } w_2 > 2t.  \\
	\end{cases} $$
\end{proof}

\section{Min-Max Functions}\label{sec:min_max}
\newcommand{\uii}[1]{\u^{(#1)}}
\newcommand{\ui}{\u^{(i)}}
\newcommand{\uj}{\u^{(j)}}
\newcommand{\uv}{\u^{(v)}}
Assume now that $k = w\ell$ for some integers $w$ and $\ell$. In this section, we consider $\u$ to be formed of $w$ parts, such that $\u=(\u^{(1)},\dots,\u^{(w)})$, where each $\u^{(i)} \in \Z_2^\ell$ is of length $\ell$. The function of interest is the min-max function defined next.

\begin{defn}\label{def:minmax} 
The min-max function is defined by
$$\minmax_w(\u) = (\underset{1\leq i\leq w}{\argmin}~ \u^{(i)},\underset{1\leq i\leq w}{\argmax}~ \u^{(i)}),$$
where  $\u=(\u^{(1)},\dots,\u^{(w)})$, $\u^{(i)} \in \Z_2^\ell$ with $k=w\ell$ and the ordering $<$ between the $\u^{(i)}$'s is primarily lexicographical (the left-most bit is the most significant) and secondarily, if $\ui=\uj$, according to ascending indices. %
\end{defn}

For example, $\u = (\uii{1},\uii{2},\uii{3}) = (100, 010,010)$, has the ordering $\uii{2}<\uii{3}<\uii{1}$ and thus $\minmax_w(\u) = (2,1)$.
For $w=1$, the function is constant and for $w=2$, the function is a binary function and we have an optimal solution from Lemma~\ref{lemma:locally:binary}. For $w\geq 3$, we provide two lower bounds on the redundancy in Lemma \ref{lemma:redandancy_lower_bound_mm} and Corollary \ref{cor:minmax:sp}. We characterize the function distance matrix of the min-max function in Claims \ref{claim:distance_ub} and \ref{claim:distance_profile} and obtain an upper bound on the redundancy based on Theorem~\ref{thm:achievability}, which is derived in Lemma \ref{lemma:redandancy_upper_bound_mm}. Since Lemma \ref{lemma:redandancy_upper_bound_mm} is obtained using a Gilbert-Varshmov argument, the result is of existential nature. We construct explicit \fccs\ based on standard error-correcting codes in Construction \ref{cons:minmax} {and Construction~\ref{cons:RMminmax}}. %
Throughout this section we refer to the function distance matrix $\D_{\mathrm{mm}}(t,f_1,\dots,f_\I)$ as $\D_{\mathrm{mm}}$ for ease of notation. 
The following example illustrates our results.

\begin{example}
Consider a min-max function with $w=3$ and $\ell \geq 3$. From Claim~\ref{claim:distance_ub} and Claim~\ref{claim:distance_profile} we obtain the function distance matrix $\D_{\mathrm{mm}}$ for this case and any $t$ as follows

\newlength{\colsep}
\setlength{\colsep}{4pt}
$$\D_{\mathrm{mm}} \!=\! \begin{blockarray}{c|@{\hspace{8pt}}c@{\hspace{\colsep}}c@{\hspace{\colsep}}c@{\hspace{\colsep}}c@{\hspace{\colsep}}c@{\hspace{\colsep}}c} 
	$f(\u)$ & (1,2) & (1,3) & (2,1) & (2,3) & (3,1) & (3,2) \\ \cline{1-7}
	\begin{block}{c|@{\hspace{8pt}}(c@{\hspace{\colsep}}c@{\hspace{\colsep}}c@{\hspace{\colsep}}c@{\hspace{\colsep}}c@{\hspace{\colsep}}c)}
		(1,2)&\cellcolor{gray!15}0&2t&2t-1&2t&2t&2t\\
		(1,3)&\textcolor{gray}{2t}&\cellcolor{gray!15}0&2t&2t&2t-1&2t\\
		(2,1)&\textcolor{gray}{2t-1}&\textcolor{gray}{2t}&\cellcolor{gray!15}0&2t&2t&2t\\
		(2,3)&\textcolor{gray}{2t}&\textcolor{gray}{2t}&\textcolor{gray}{2t}&\cellcolor{gray!15}0&2t&2t-1\\
		(3,1)&\textcolor{gray}{2t}&\textcolor{gray}{2t-1}&\textcolor{gray}{2t}&\textcolor{gray}{2t}&\cellcolor{gray!15}0&2t\\
		(3,2)&\textcolor{gray}{2t}&\textcolor{gray}{2t}&\textcolor{gray}{2t}&\textcolor{gray}{2t-1}&\textcolor{gray}{2t}&\cellcolor{gray!15}0\\
	\end{block}
\end{blockarray} .$$

For example, the function distance between the function values $(1,2)$ and $(1,3)$ is $1$ since there exist information words $\u_1 = (000,010,001)$ and $\u_2 = (000,010,0\mathbf{1}1)$ such that $d(\u_1,\u_2) = 1$ and $\minmax_w(\u_1) = (1,2)$ and $\minmax_w(\u_2) = (1,3)$. For $w=3$ this holds for all pairs of function values except for those of the form $(i,j)$ and $(j,i)$ where at least two bits must be changed to move from one function value to another, i.e., for every $\u_1,\u_2$ such that $\minmax_w(\u_1) = (i,j)$ and $\minmax_w(\u_2)=(j,i)$, we have $d(\u_1,\u_2)\geq 2$, cf. proof of Claim~\ref{claim:distance_profile}. A possible construction for an \fcc~is to use a code with cardinality $w(w-1)=6$ and distance matrix $\D_{\minmax}$ in the fashion of Theorem \ref{thm:achievability}, i.e., the redundancy vectors are encoded based on $f(\u)$ instead of $\u$. Note that we will observe later that such an encoding yields a redundancy that is not too far from optimality. From Lemma \ref{lemma:redandancy_lower_bound_mm}, which is presented in the sequel, for $w=3$ the optimal \fcc~redundancy is at least $10t/3-11/6$. On the other hand, using single-parity check codes, we will construct next an \fcc\ for $w=3$ with redundancy $r_\mathrm{SP} = 4t$ in Construction~\ref{cons:minmax}.

\end{example}

We now formally present our results. We start with the lower bound on the redundancy.

\begin{lemma}\label{lemma:redandancy_lower_bound_mm}
	For $w\geq 3$ and $\ell\geq 2$, the optimal redundancy $r_{\minmax_w}(k,t)$ is bounded from below by
	\begin{equation*}
	    r_{\minmax_w}(k,t)\geq \dfrac{4 t (w^2 - w - 1) - 3w^2 + 7w - 5}{(w - 1) w}.
	\end{equation*}
\end{lemma}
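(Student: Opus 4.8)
The plan is to prove the bound via Corollary~\ref{cor:lower:bound}: I will pick one representative information vector for each min-max value (possibly discarding one value to make the count odd), compute the distance requirements these representatives impose, and lower-bound the resulting $N(\cdot,\cdot)$ with the generalized Plotkin bound of Lemma~\ref{lemma:irregular:distance:code:plotkin}. The image of $\minmax_w$ is the set of $w(w-1)$ ordered index pairs $(i,j)$ with $i\neq j$. For each such value I fix a word $\u_{(i,j)}\in\Z_2^{w\ell}$ whose part $i$ is the unique smallest and part $j$ the unique largest part, so that $\minmax_w(\u_{(i,j)})=(i,j)$; the part values are chosen to make the pairwise Hamming distances between the representatives as small as possible. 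Corollary~\ref{cor:lower:bound} then reduces the problem to evaluating (a Plotkin lower bound on) $N$ for the induced distance-requirement matrix.

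The central computation is this distance-requirement matrix. I would classify the pairs $\{\u_{(i,j)},\u_{(i',j')}\}$ by how the index pairs overlap --- equal argmin, equal argmax, reverse pairs $(i,j)/(j,i)$, a single index shared in opposite roles, or disjoint --- determine the Hamming distance realized by each class, and count the multiplicity of each class as a polynomial in $w$ (for instance there are $\binom{w}{2}$ reverse pairs and $w(w-1)(w-2)$ single-overlap pairs). Summing $[2t+1-d]^+$ over all pairs then gives $\sum_{i<j}[\D_f]_{ij}$ in closed form.

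The main obstacle is that the function distances determined in Claims~\ref{claim:distance_ub} and~\ref{claim:distance_profile} cannot be realized all at once by a single family of representatives: in a hypercube the distance-$1$ neighbours of any vertex are pairwise at distance $2$, so one cannot simultaneously drive every non-reverse pair down to its function distance $1$. Consequently the representative distances are genuinely larger than the entrywise function-distance matrix and must be computed honestly, with the reverse pairs pinned to distance at least $2$. The second delicate point is the lower-order term: a plain application of Lemma~\ref{lemma:irregular:distance:code:plotkin} loses the exact constant, so I would sharpen it exactly as in Corollary~\ref{cor:hamming:weight:lower:bound}, using that the sum of pairwise distances around any triple of codewords is even, hence every triple whose requirement sum is odd forces the realized distance sum up by at least one. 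Tallying these forced increments --- and choosing the even or odd case of the Plotkin bound according to the parity of the number of representatives used --- supplies the integer correction that converts the raw Plotkin estimate into the stated bound.

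Finally I would simplify. Substituting the closed-form requirement sum and the parity correction into the Plotkin bound and collecting powers of $t$ --- with the $t$-coefficient already matching the regular Plotkin value $2\cdot 2t\cdot\frac{w(w-1)-1}{w(w-1)}=\frac{4t(w^2-w-1)}{w(w-1)}$ of Corollary~\ref{cor:plotkin:regular} --- should reduce the expression to $\frac{4t(w^2-w-1)-3w^2+7w-5}{(w-1)w}$. I expect the algebra here to be routine once the class multiplicities and the parity correction are correctly in hand; these two combinatorial steps are where the real work and the risk of error reside.
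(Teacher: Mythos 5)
Your skeleton is the paper's: one representative per function value, Corollary~\ref{cor:lower:bound}, a classification of pairwise distances, then Lemma~\ref{lemma:irregular:distance:code:plotkin}. For $\ell=2$ the paper takes $\u_{i,j}=(01,\dots,01,00,01,\dots,01,11,01,\dots,01)$ with $00$ in part $i$ and $11$ in part $j$; your five overlap classes collapse to two, since $d(\u_{i,j},\u_{i,j'})=d(\u_{i,j},\u_{i',j})=2$ (shared index in the \emph{same} role) while reverse pairs, opposite-role overlaps, and disjoint pairs are all at distance $4$. This yields the row profile of $2(w-2)$ entries equal to $2t-1$ and $(w-1)(w-2)+1$ entries equal to $2t-3$, which the paper plugs into the even branch of the Plotkin-type bound ($M=w(w-1)$ is always even, so your ``discard one value to make the count odd'' step is unnecessary and would only weaken the bound). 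Your observation that the function distances of Claims~\ref{claim:distance_ub} and~\ref{claim:distance_profile} cannot all be realized simultaneously by one family of representatives is correct and is precisely why the realized distances are $2$ and $4$ rather than $1$ and $2$.

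The genuine gap is your final step: the paper uses \emph{no} parity correction, and the one you propose provably cannot supply the constant you need. Evaluating $\frac{2}{M^2}\sum_{i,j}[\D_{\mathrm{mm}}]_{ij}$ honestly with the profile above gives
\begin{equation*}
\frac{2}{w(w-1)}\Bigl(2(w-2)(2t-1)+\bigl((w-1)(w-2)+1\bigr)(2t-3)\Bigr)=\frac{4t(w^2-w-1)-6w^2+14w-10}{w(w-1)},
\end{equation*}
whose constant term is \emph{twice} (hence more negative than) the $-3w^2+7w-5$ in the statement; the shortfall $\frac{3w^2-7w+5}{w(w-1)}$ ranges from $\frac{11}{6}$ at $w=3$ up to nearly $3$. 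Now bound what triangle parity can buy: since every requirement is odd, every triple indeed forces excess, but the excess parities $x_{ij}=d(\p_i,\p_j)-[\D_{\mathrm{mm}}]_{ij} \bmod 2$ satisfying ``odd sum on every triangle'' are exactly those of the form $x_{ij}=y_i+y_j+1$, so a code can concentrate the forced excesses on as few as $\binom{M/2}{2}+\binom{M/2}{2}=\frac{M(M-2)}{4}$ pairs; after the $\frac{4}{M^2}$ normalization this adds at most $\frac{M-2}{M}<1$ to the bound (your cruder per-triangle tally gives only about $\frac{2}{3}$). So the sharpening in the style of Corollary~\ref{cor:hamming:weight:lower:bound} closes strictly less than $1$ of a gap that is at least $\frac{11}{6}$ --- your suspicion that plain Plotkin misses the printed constant is well-founded, but the discrepancy actually traces to the final algebraic step of the paper's own derivation (which halves the constant term), not to an omitted parity argument, and your route as described cannot reach the stated expression.
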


\begin{proof}%
	Let $\u_{i,j} \in \Z_2^k$ with $i,j \in [w]$, $i\neq j$ be $w(w-1)$ information vectors that will be specified later and $\D_{\mathrm{mm}}(t,\u_{1,2}, \dots,\u_{w-1,w})$ be their distance matrix. We use the result of Corollary~\ref{cor:lower:bound} and Lemma~\ref{lemma:irregular:distance:code:plotkin} to obtain
	\begin{align*}
	    r_{\minmax_w}\!(k,t)  &\geq N(\D_{\mathrm{mm}}(t,\u_{1,2}, \dots,\u_{w-1,w}))\\
	      & \geq \dfrac{4}{(w(w\!-\!1))^2}\!\!\sum_{i,j:i<j}\![\D_{\mathrm{mm}}(t,\u_{1,2}, \dots,\u_{w-1,w})]_{ij}.
	\end{align*}
	
	We first prove the lower bound for $\ell = 2$. To obtain a good lower bound, we need to find a suitable set of $w(w-1)$ representative information vectors and characterize their distance matrix $\D_{\mathrm{mm}}(t,\u_{1,2}, \dots,\u_{w-1,w})$. We choose the representative information vectors $\u_{i,j}$ to be 
	$$ \u_{i,j} =  (01,\dots,01,\underbrace{00}_{\u^{(i)}},\underbrace{11}_{\u^{(j)}}, 01,\dots,01), $$
	where $i,j \in [w]$ and $i\neq j$. Note that $\minmax_w(\u_{i,j}) = (i,j)$ and therefore the corresponding function values are all distinct. Let further $i',j' \in [w]$ with $i \neq i'$ and $j \neq j'$. We directly see that $d(\u_{i,j}, \u_{i,j'}) = d(\u_{i,j}, \u_{i',j}) = 2$ for function values which agree either in the minimum or maximum value. Further, $d(\u_{i,j}, \u_{i',j'}) = 4$ for function values that agree neither on the minimum nor maximum. For a given $\u_{i,j}$, there are thus
	\begin{itemize}
	  \item $(w-2)$ information vectors $\u_{i,j'}$ at distance $2$,
	  \item $(w-2)$ information vectors $\u_{i',j}$ at distance $2$,
	  \item $(w\!-\!1)(w\!-\!2)\!+\!1$ information vectors $\u_{i',j'}$ at distance $4$.
	\end{itemize}
	Therefore, each row of $\D_{\mathrm{mm}}(t,\u_{1,2}, \dots,\u_{w-1,w})$ has $2(w-2)$ entries that are equal to $2t-1$ and $(w-1)(w-2)+1$ entries that are equal to $2t-3$. Having characterized the values of the entries of the distance matrix, we can now write
	\begin{align*}
	    r_{\minmax_w}& \!(k,t) \geq \dfrac{4}{(w(w\!-\!1))^2}\sum_{i,j:i<j}\![\D_{\mathrm{mm}}(t,\u_{1,2}, \dots,\u_{w-1,w})]_{ij} \label{eq:lb1}\\
	      & \overset{(a)}{=} \dfrac{2}{(w(w-1))^2}\sum_{i,j}[\D_{\mathrm{mm}}(t,\u_{1,2}, \dots,\u_{w-1,w})]_{ij} \\
	      & \overset{(b)}{=} \dfrac{4 t (w^2 - w - 1) - 3w^2 + 7w - 5}{(w - 1) w}.
	\end{align*}
	Equation $(a)$ follows from the symmetry of the matrix $\D_{\mathrm{mm}}(t,\u_{1,2}, \dots,\u_{w-1,w})$ and equality $(b)$ follows by replacing the values discussed above and rearranging the terms. This proves the lower bound of Lemma \ref{lemma:redandancy_lower_bound_mm}. The proof for all $\ell>2$ follows the same steps after setting the $\ell-2$ left-most bits in every part of each $\u_{i,j}$ to $0$. 
\end{proof}

While this bound provides a good bound for large $t$ and moderate $w$, we can derive a stronger bound for fixed $t$ and large $w$ as follows.
\begin{corollary} \label{cor:minmax:sp}
	For $w \geq 3$ and $\ell \geq 2$, the optimal redundancy $r_{\minmax_w}(k,t)$ is bounded from below by
	$$ r_{\minmax_w}(k,t) \geq \log (w(w-1)) + (t-2) \log \log (w(w-1)) -t\log t. $$
\end{corollary}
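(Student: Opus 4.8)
The plan is to reuse the representative information vectors from the proof of Lemma~\ref{lemma:redandancy_lower_bound_mm} and again invoke Corollary~\ref{cor:lower:bound}, but to replace the Plotkin-type argument by a sphere-packing (Hamming-type) bound, which is sharper in the regime of fixed $t$ and large $w$. Concretely, I take the $M = w(w-1)$ representatives $\u_{i,j}$ ($i\neq j$, $i,j\in[w]$) constructed there for $\ell=2$, extended by zero-padding the $\ell-2$ left-most bits of each part when $\ell>2$, so that $\minmax_w(\u_{i,j})=(i,j)$ and all function values are distinct. Recall from that proof that every off-diagonal entry of $\D_{\mathrm{mm}}(t,\u_{1,2},\dots,\u_{w-1,w})$ equals either $2t-1$ or $2t-3$; in particular every entry is at least $2t-3$. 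By Corollary~\ref{cor:lower:bound} we have $r_{\minmax_w}(k,t)\geq N\big(M,\D_{\mathrm{mm}}(t,\u_{1,2},\dots,\u_{w-1,w})\big)$, so it suffices to bound this quantity from below.

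Next I apply a volume argument to $N(M,\D_{\mathrm{mm}})$. Let $\{\p_1,\dots,\p_M\}\subseteq\Z_2^r$ be any code realizing the distance requirements, so that $d(\p_i,\p_j)\geq 2t-3$ for all $i\neq j$. Since $2(t-2)+1 = 2t-3$, the Hamming balls $B(\p_i,t-2)$ of radius $t-2$ are pairwise disjoint, whence $M\cdot V(r,t-2)\leq 2^r$. Thus any length $r$ admitting such a code must satisfy $2^r \geq M\,V(r,t-2)$, and therefore $N(M,\D_{\mathrm{mm}})$ is bounded below by the smallest $r$ obeying this inequality.

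It remains to solve the inequality asymptotically. Taking logarithms and using $V(r,t-2)\geq\binom{r}{t-2}\geq\big(r/(t-2)\big)^{t-2}$ gives $r \geq \log M + (t-2)\log r - (t-2)\log(t-2)$. Because $2^r\geq M$ forces $r\geq \log M$, I may substitute $\log r \geq \log\log M$ into the self-referential term, obtaining $r \geq \log M + (t-2)\log\log M - (t-2)\log(t-2)$. Finally, since $(t-2)\log(t-2)\leq t\log t$ for $t\geq 2$ (the left side is $0$ for $t\in\{2,3\}$ and termwise smaller otherwise), I conclude $r \geq \log M + (t-2)\log\log M - t\log t$, and inserting $M=w(w-1)$ yields the claimed bound.

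The main obstacle I anticipate is the self-referential inequality, where $r$ appears on both sides through $\log r$: the crude replacement $\log r \geq \log\log M$ is what makes the argument clean, and it loses nothing to leading order precisely because the achievability bounds (Lemma~\ref{lemma:regular:distance:code}) show $r$ is of order $\log M$ in this regime. The two other points requiring care are that the sphere-packing bound must be applied through the \emph{minimum} entry $2t-3$ of the irregular distance matrix (so it transfers verbatim to irregular-distance codes), and that the tidy constant $-t\log t$ is obtained only by coarsening $-(t-2)\log(t-2)$; this coarsening, together with the use of the packing radius $t-2$ coming from the $2t-3$ distances, is also the reason the $\log\log$ coefficient is $t-2$ rather than $t-1$.
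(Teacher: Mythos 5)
Your proposal is correct and follows essentially the same route as the paper's proof: the same representatives $\u_{i,j}$ from Lemma~\ref{lemma:redandancy_lower_bound_mm}, the same reduction of the irregular-distance requirement to a regular minimum distance $2t-3$ via the minimum matrix entry (the paper phrases this as $d(\u_{i,j},\u_{i',j'})\leq 4$, hence $N(w(w-1),\D_{\minmax})\geq N(w(w-1),2t-3)$), the same sphere-packing inequality $2^r\geq w(w-1)\,V(r,t-2)$, and the same resolution of the self-referential bound by substituting $r\geq\log w(w-1)$. The only cosmetic difference is in how the additive constant is coarsened --- you bound $(t-2)\log(t-2)\leq t\log t$ where the paper bounds $\log(t-2)\leq\log t$ to get $-(t-2)\log t$ --- and both trivially imply the stated $-t\log t$.
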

\begin{proof}
	From the proof of Lemma \ref{lemma:redandancy_lower_bound_mm}, we know that $r_{\minmax_w}(k,t) \geq N(\D_{\minmax}(t,\u_{1,2}, \dots,\u_{w-1,w}))$. This quantity however can be bounded from below by noting that $d(\u_{i,j},\u_{i',j'}) \leq 4$ for any $i,j,i',j'$ (as shown in the same proof) and thus $N(\D_{\minmax}(t,\u_{1,2}, \dots,\u_{w-1,w})) \geq N(w(w-1), 2t-3)$. In other words, the $w(w-1)$ vectors must form a code of minimum distance $2t-3$. Abbreviating $r \triangleq N(w(w-1),2t-3)$ it follows from a sphere packing argument that $2^r \geq w(w-1) V(r,t-2)$, where $V(r,t)=\sum_{i=0}^{t}\binom{r}{i}$ is the size of the radius-$t$ Hamming sphere over vectors of length $r$. Consequently,
	\begin{align*}
		r &\geq \log w(w-1) + \log V(r,t-2) \\
			&\geq \log w(w-1) + (t-2) \log (r/(t-2)) \\
			& \overset{(a)}{\geq} \log w(w-1) + (t-2)\log \log w(w-1)-(t-2) \log t,
	\end{align*}
	where in $(a)$, we used the inequality $r \geq \log w(w-1)$.
\end{proof}
We provide two upper bounds on the optimal redundancy $r_{\minmax_w}(k,t)$ of \fccs\ designed for the min-max function. The first bound (Corollary~\ref{cor:mm:regular:ecc}) follows from Lemma~\ref{lemma:regular:distance:code} and uses standard error-correcting codes. On the other hand, the second bound (Lemma~\ref{lemma:redandancy_upper_bound_mm}) is obtained by examining the function distance matrix of the min-max function and using irregular-distance error-correcting codes.
\begin{corollary}[Corollary of Lemma~\ref{lemma:regular:distance:code}]\label{cor:mm:regular:ecc}
Given $t$ and $w$ such that $t\geq 5$ and $w(w-1)\leq 4t^2$, the optimal redundancy $r_{\minmax_w}(k,t)$ is bounded from above by
	$$ r_{\minmax_w}(k,t) \leq \frac{4t}{1-2\sqrt{\ln(2t)/2t}}. $$ 
\end{corollary}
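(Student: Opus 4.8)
The plan is to obtain this upper bound as a direct application of the generic construction in Lemma~\ref{lemma:ecc} together with the existence bound for small regular-distance codes in Lemma~\ref{lemma:regular:distance:code}. Concretely, I would first determine the expressiveness $\I$ of the min-max function, then translate the two hypotheses of the corollary into the two hypotheses of Lemma~\ref{lemma:regular:distance:code}, and finally substitute.

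First I would argue that $\I = w(w-1)$. For $w \geq 2$ the output $\minmax_w(\u) = (\argmin_i \u^{(i)}, \argmax_i \u^{(i)})$ is always an ordered pair of \emph{distinct} indices, because the secondary tie-breaking rule by ascending index forces the minimizing and maximizing index to differ even when several parts coincide. Conversely, every ordered pair $(i,j)$ with $i \neq j$ is attained, e.g.\ by setting $\u^{(i)} = 0^\ell$, $\u^{(j)} = 1^\ell$ and the remaining parts to intermediate values. Hence $\Im(\minmax_w)$ consists of exactly the $w(w-1)$ ordered pairs of distinct indices.

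Next, Lemma~\ref{lemma:ecc} immediately yields $r_{\minmax_w}(k,t) \leq N(\I, 2t) = N(w(w-1), 2t)$, so it remains to bound the length of a regular-distance code of cardinality $w(w-1)$ and minimum distance $2t$. Here I would apply Lemma~\ref{lemma:regular:distance:code} with the substitutions $D = 2t$ and $M = w(w-1)$. The two requirements of that lemma, namely $D \geq 10$ and $M \leq D^2$, are exactly the hypotheses of the corollary: $2t \geq 10$ is equivalent to $t \geq 5$, and $w(w-1) \leq (2t)^2 = 4t^2$ is the second assumption. Plugging $D = 2t$ into the bound $N(M,D) \leq 2D/(1-2\sqrt{\ln(D)/D})$ gives $N(w(w-1), 2t) \leq 4t/(1 - 2\sqrt{\ln(2t)/(2t)})$, which is the claimed inequality.

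Since the proof is a chained application of two earlier results, there is no genuine technical obstacle; the only points requiring care are the correct identification $\I = w(w-1)$ (in particular checking that $\argmin$ and $\argmax$ never coincide for $w \geq 2$) and the bookkeeping that the corollary's hypotheses are precisely the parameter constraints $D \geq 10$ and $M \leq D^2$ of Lemma~\ref{lemma:regular:distance:code} under the substitution $D = 2t$, $M = w(w-1)$.
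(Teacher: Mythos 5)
Your proposal is correct and follows essentially the same route as the paper: the paper's proof likewise encodes the parity vectors with a regular-distance code of minimum distance $2t$ (the construction underlying Lemma~\ref{lemma:ecc}) and then invokes Lemma~\ref{lemma:regular:distance:code} with $D=2t$, $M=w(w-1)$. Your explicit verification that $\I = w(w-1)$ and that the corollary's hypotheses match the conditions $D \geq 10$ and $M \leq D^2$ is sound bookkeeping that the paper leaves implicit.
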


\begin{proof}
Encoding the parity vectors with an error-correcting code of minimum distance $2t$ results in an \fcc . The redundancy of this \fcc\ is then equal to the length of the used code. Therefore, the bound holds from Lemma~\ref{lemma:regular:distance:code}.
\end{proof}

\begin{lemma}\label{lemma:redandancy_upper_bound_mm}
	For $w\geq 3$ and $\ell\geq 3$, the optimal redundancy $r_{\minmax_w}(k,t)$ of \fccs\ is bounded from above by
    \begin{align*}
        r_{\minmax_w}(k,t) %
        & \leq {\min_{r\in \mathbb{N}}}~\left\{r:\Phi(r)>0\right\},
    \end{align*}
    where 
    $\Phi(r) \triangleq 2^{r} - (w^2-w-1) V(r,2t-2)+ (4w-8)\binom{r}{2t-1}.$
\end{lemma}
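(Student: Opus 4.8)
The plan is to obtain this upper bound from the achievability result of Theorem~\ref{thm:achievability}, which already gives $r_{\minmax_w}(k,t) \leq N(w(w-1),\D_{\mathrm{mm}})$, since the min-max function takes exactly $w(w-1)$ distinct values $(i,j)$ with $i \neq j$. It then remains only to upper bound $N(w(w-1),\D_{\mathrm{mm}})$ by the stated quantity, and the natural tool is the Gilbert--Varshamov-type existence bound of Lemma~\ref{lemma:irregular:distance:code:gv}. The whole argument thus reduces to (i) pinning down the entries of $\D_{\mathrm{mm}}$ and (ii) evaluating the greedy sphere-covering sum for this particular matrix.

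For step (i) I would invoke Claims~\ref{claim:distance_ub} and~\ref{claim:distance_profile}, which characterize the function distances $d_{\minmax_w}$. The picture I expect to establish there is that for $\ell \geq 3$ two distinct function values $(i,j)$ and $(i',j')$ satisfy $d_{\minmax_w}=1$ exactly when $\{i,j\}$ and $\{i',j'\}$ share precisely one index (so a single bit flip suffices to move the relevant extremal part), and $d_{\minmax_w}=2$ otherwise, i.e.\ when the index pairs are disjoint or are transposes $(i,j),(j,i)$. Translating through $[\D_{\mathrm{mm}}]_{ij}=[2t+1-d_{\minmax_w}]^+$, each row of $\D_{\mathrm{mm}}$ then carries exactly $4(w-2)$ entries equal to $2t$ (the single-overlap neighbors $(i,j'),(i',j),(j,j'),(i',i)$, each family contributing $w-2$) and the remaining $(w-2)(w-3)+1 = w^2-5w+7$ off-diagonal entries equal to $2t-1$. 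A sanity check is $w=3$, where this recovers the four entries of value $2t$ and the single entry of value $2t-1$ per row displayed in the worked example.

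For step (ii) I would apply Lemma~\ref{lemma:irregular:distance:code:gv} and note that the binding term of the outer maximum is the last index $j=w(w-1)$, whose row sees all $w(w-1)-1$ other codewords as predecessors and therefore maximizes the forbidden sum. Splitting these predecessors according to the two possible distance requirements, the sum of forbidden sphere sizes is
\[
4(w-2)\,V(r,2t-1) + (w^2-5w+7)\,V(r,2t-2).
\]
Using $V(r,2t-1)=V(r,2t-2)+\binom{r}{2t-1}$ to put both terms on the common base $V(r,2t-2)$ collapses the leading coefficient to $4(w-2)+(w^2-5w+7)=w^2-w-1$ and isolates a residual $(4w-8)\binom{r}{2t-1}$ coming from the heavier spheres. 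The existence condition $2^{r}>(\text{sum})$ of Lemma~\ref{lemma:irregular:distance:code:gv} then has exactly the shape $\Phi(r)>0$, the $\binom{r}{2t-1}$ term recording the $4(w-2)$ distance-$2t$ constraints; taking the smallest such $r$ gives the claim.

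The main obstacle is step (i): correctly establishing the distance profile of $\D_{\mathrm{mm}}$, and in particular proving that the single-overlap pairs, and only those, achieve $d_{\minmax_w}=1$, while disjoint and transpose pairs genuinely require two flips. This is exactly where $\ell\geq 3$ enters — a part needs enough room to be pushed strictly below or above all others with a single flip — and it is the content I would defer to Claims~\ref{claim:distance_ub} and~\ref{claim:distance_profile}. Once the matrix is fixed, step (ii) is routine but sign-sensitive bookkeeping: the only care required is in tracking whether each neighbor contributes a radius-$(2t-1)$ or a radius-$(2t-2)$ sphere, which is precisely what the count $4(w-2)$ encodes.
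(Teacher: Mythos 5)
Your proposal follows the paper's proof essentially step for step: Theorem~\ref{thm:achievability} reduces the problem to bounding $N(w(w-1),\D_{\mathrm{mm}})$, Claims~\ref{claim:distance_ub} and~\ref{claim:distance_profile} (to which the paper likewise defers, with proofs in the appendices) give each row exactly one $0$, $4(w-2)$ entries equal to $2t$, and $w^2-5w+7$ entries equal to $2t-1$, and Lemma~\ref{lemma:irregular:distance:code:gv} with the last index binding yields the condition $2^r > (4w-8)V(r,2t-1)+(w^2-5w+7)V(r,2t-2)=(w^2-w-1)V(r,2t-2)+(4w-8)\binom{r}{2t-1}$, which is precisely your bookkeeping. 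One remark: your (correct) algebra gives $\Phi(r)=2^{r}-(w^2-w-1)V(r,2t-2)-(4w-8)\binom{r}{2t-1}$, i.e., a \emph{minus} sign before the binomial term, whereas the lemma as printed carries a $+$; that plus sign is a typo in the paper's statement (the paper's own proof needs $\Phi(r)\leq\Phi'(r)$, which holds only with the minus), so your assertion that your condition ``has exactly the shape $\Phi(r)>0$'' in fact silently corrects the printed formula rather than matching it.
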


\begin{proof}%

We start by bounding the distance between any two function values.

\begin{claim}\label{claim:distance_ub}
Consider a min-max function as defined in Definition~\ref{def:minmax}. For all $w\geq 3$ and $\ell \geq 3$ the minimum distance between any two function values (cf. Definition~\ref{def:function_distance}) $f_1$ and $f_2$ is at most $2$, i.e.,%
\begin{equation*}
\forall f_1,f_2 \in \mathrm{Im}(\minmax_w),\quad  d_{\minmax_w}(f_1,f_2) \leq 2.
\end{equation*}
\end{claim}

To prove Claim~\ref{claim:distance_ub} we need to show that for every two function values $f_1\neq f_2$, there exist two information vectors $\u_1\neq \u_2$ such that $\minmax_w(\u_1)=f_1$, $\minmax_w(\u_2) = f_2$ and $d(\u_1,\u_2) = 2$. We show the existence of such information vectors in Appendix~\ref{app:proof_claim1}. Given the result of Claim~\ref{claim:distance_ub}, we know that the entries of $\D_{\mathrm{mm}}$, $[\D_{\mathrm{mm}}]_{ij} = 2t + 1 - d_{\minmax_w}(f_i,f_j)$, are bounded from below by $2t-1$. The remaining part is to count the number of entries that satisfy $[\D_{\mathrm{mm}}]_{ij} = 2t$, i.e., the number of values $i,j$ for which $d_{\minmax_w}(f_i,f_j)=1$. We show that the number of such entries is equal to $4w(w-1)(w-2) + 2(w-1)$ by counting the number of function values that satisfy $d_{\minmax_w}(f_i,f_j) = 1$.

\begin{claim}\label{claim:distance_profile}
Consider a min-max function as defined in Definition~\ref{def:minmax}. For all $w\geq 3$ and $\ell \geq 3$, given a function value $f_1 = (i,j)$, the number of function values $f_2 \neq (i,j)$ that satisfy $d_{\minmax_w}(f_1,f_2) = 1$ is $4(w-2)$.
Therefore, the number of entries in $\D_{\mathrm{mm}}$ that is equal to $2t$ is equal to $4w(w-1)(w-2).$
\end{claim}
The proof of Claim~\ref{claim:distance_profile} consists of finding for every function value $f_1$ the number of distinct function values $f_2$ that can be obtained by changing one bit in any information vector $\u$ satisfying $\minmax_w(\u) = f_1$. A formal proof is provided in Appendix~\ref{app:proof_claim2}. The results of Claim~\ref{claim:distance_ub} and Claim~\ref{claim:distance_profile} characterize the entries of the function distance matrix $\D_{\mathrm{mm}}$. Recall that Theorem~\ref{thm:achievability} implies that
\begin{equation*}
    r_{\minmax_w}(k,t)\leq N\left(\D_{\mathrm{mm}}\right). 
\end{equation*}

We use Lemma~\ref{lemma:irregular:distance:code:gv} and the results of Claim~\ref{claim:distance_ub} and Claim~\ref{claim:distance_profile} to prove Lemma~\ref{lemma:redandancy_upper_bound_mm}. From Lemma~\ref{lemma:irregular:distance:code:gv} and by symmetry of $\D_{\mathrm{mm}}$ we have
\begin{equation*}
    N\left( \D_{\mathrm{mm}}\right) \leq  {\min_{r\in \mathbb{N}}}~\text{s.t.}~\Phi(r)\geq 0,
\end{equation*}
where $$\Phi'(r)=2^{r} - \underset{i \in [w(w-1)]}{\max} \sum_{j=1}^{i-1} V\left(r,[\D_{\mathrm{mm}}]_{\pi(i)\pi(j)}-1\right)$$ and $\pi$ is a permutation of the integers in $[w(w-1)]$. Note that $\sum_{j=1}^{i-1} V(r,[\D_{\mathrm{mm}}]_{\pi(i)\pi(j)}-1)$ is summing all the entries of a given row $\pi(i)$ of $\D_{\mathrm{mm}}$. Thus the maximum of this sum can be bounded from above by setting $i=w(w-1)$ and choosing a row with the largest entries.

From Claim~\ref{claim:distance_ub} and Claim~\ref{claim:distance_profile}, we know that a row $i$ with maximum entries contains exactly one entry equal to $0$, $4w-8$ entries equal to $2t$ and the rest is equal to $2t-1$. Given this observation, we obtain that $\Phi(r)$ in the Lemma statement is a lower bound to $\Phi'(r)$ and the lemma follows.
\end{proof}

We give an \fcc\ based on the single-parity check code in Construction~\ref{cons:minmax}.

\begin{construction}\label{cons:minmax}
Let $\mathcal{C}_\mathrm{SP}$ be a subcode of the single-parity check code of size $w(w-1)$. Replicate every bit in the codewords of $\mathcal{C}_\mathrm{SP}$ to $t$ bits. Assign a unique codeword of the expanded version of $\mathcal{C}_\mathrm{SP}$ to a redundancy vector $\p_{i,j}$ used for all information vectors $\u$ such that $f(\u)=(i,j)$.
\end{construction}

\begin{lemma}\label{lemma:minmax}
Construction~\ref{cons:minmax} is an \fcc\ for the min-max function and has redundancy %
		$ r_\mathrm{SP}= t(\left\lceil \log\left(w(w-1)\right)\right\rceil + 1). $
\end{lemma}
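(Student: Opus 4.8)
The plan is to split the lemma into its two assertions—that Construction~\ref{cons:minmax} is a valid \fcc, and that its redundancy equals $t(\lceil \log(w(w-1))\rceil + 1)$—and to observe that the construction is nothing but the function-value encoding of Lemma~\ref{lemma:ecc} instantiated with an explicit minimum-distance-$2t$ code. Correctness will then follow essentially for free, and the redundancy claim reduces to computing the block length of the replicated single-parity-check code.

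First, for the redundancy I would count the length $n_0$ of the base code $\mathcal{C}_\mathrm{SP}$. The construction needs $w(w-1)$ distinct even-weight vectors, and a binary code of length $n_0$ contains exactly $2^{n_0-1}$ even-weight vectors, so the smallest admissible length satisfies $2^{n_0-1}\geq w(w-1)$, giving $n_0=\lceil \log(w(w-1))\rceil +1$. Replicating every coordinate $t$ times multiplies the length by $t$, so the redundancy is $t\,n_0 = t(\lceil \log(w(w-1))\rceil +1)$, matching the claimed $r_\mathrm{SP}$.

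Next, for correctness I would show the replicated code has minimum distance $2t$. Any two distinct codewords of $\mathcal{C}_\mathrm{SP}$ differ in at least two coordinates, since the single-parity-check code has minimum distance $2$; replicating each coordinate $t$ times turns each differing coordinate into $t$ differing coordinates, so the replicated codewords differ in at least $2t$ positions. Because the encoder has the form $\u\mapsto(\u,\p_{\minmax_w(\u)})$ and distinct function values are assigned distinct redundancy vectors, for any $\u_1,\u_2$ with $\minmax_w(\u_1)\neq\minmax_w(\u_2)$ I obtain $d(\Enc(\u_1),\Enc(\u_2))=d(\u_1,\u_2)+d(\p_{\minmax_w(\u_1)},\p_{\minmax_w(\u_2)})\geq 1+2t$, where $d(\u_1,\u_2)\geq 1$ holds because $\u_1\neq\u_2$. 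This is precisely the requirement of Definition~\ref{def:function:correcting:code}, so the construction is an \fcc; equivalently, it is the special case of Lemma~\ref{lemma:ecc} in which the cardinality-$\I$ code of minimum distance $2t$ (with $\I=w(w-1)$) is taken to be the $t$-fold replicated single-parity-check code.

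I expect no genuine obstacle here; the only points requiring care are bookkeeping. One must note that the image of $\minmax_w$ has size exactly $w(w-1)$—every ordered pair of distinct indices is attainable, which follows from the tie-breaking convention of Definition~\ref{def:minmax}—so that $w(w-1)$ redundancy vectors indeed suffice, and one must invoke the elementary count of even-weight vectors in the length computation. Since the distance argument is identical in spirit to the proof of Lemma~\ref{lemma:ecc}, no deeper difficulty arises.
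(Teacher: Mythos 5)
Your proof is correct and takes essentially the same route as the paper's, whose argument consists of exactly your three observations: the length of $\mathcal{C}_\mathrm{SP}$ is $\lceil\log(w(w-1))\rceil+1$, its minimum distance is $2$, and $t$-fold bit replication yields a code of length $t(\lceil\log(w(w-1))\rceil+1)$, cardinality $w(w-1)$, and minimum distance $2t$, with the final \fcc\ verification left implicit. Your additional details---the count of $2^{n_0-1}$ even-weight vectors justifying the length, the explicit distance computation $d(\Enc(\u_1),\Enc(\u_2))\geq 1+2t$ in the spirit of Lemma~\ref{lemma:ecc}, and the remark that $|\Im(\minmax_w)|=w(w-1)$---merely spell out what the paper takes for granted.
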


\begin{proof}
The lemma follows from the following observations: \begin{enumerate*}\item The length of each codeword in $\mathcal{C}_{\mathrm{SP}}$ is $\left\lceil \log\left(w(w-1)\right)\right\rceil + 1$; \item the minimum distance of $\mathcal{C}_{\mathrm{SP}}$ is $2$; and \item replicating every bit in the codewords of $\mathcal{C}_{\mathrm{SP}}$ gives the desired code of length $t\left(\left\lceil \log\left(w(w-1)\right)\right\rceil + 1\right)$, cardinality $w(w-1)$ and minimum distance $2t$.
\end{enumerate*}
\end{proof}

We present another {\fcc} based on Reed-Muller codes in Construction~\ref{cons:RMminmax}. For more information about Reed-Muller codes, we refer the reader to \cite{roth_introduction_2006}.

\begin{construction}\label{cons:RMminmax}
Consider the $\mathrm{RM}(r,m)$ Reed-Muller code of length $2^m$, cardinality $k_{\textrm{r,m}}\triangleq\sum_{i=0}^r \binom{m}{i},$ and minimum distance $2^{m-r}$. For given $w,t$ choose $m$ such that it is the smallest integer possible for which there exists an integer $r$ satisfying $2^{m-r}\geq 2t$ and $k_\textrm{r,m}\geq \log(w(w-1))$. Denote by $\p_{1,2},\p_{1,3},\dots,\p_{w-1,w}$ an arbitrary subcode of size $w(w-1)$ of the $\mathrm{RM}(r,m)$ code. We then define
$$ \Enc_{\minmax_w}(\u) = (\u, \p_{\minmax_w(\u)}). $$
\end{construction}

Following the arguments of Lemma~\ref{lemma:minmax}, it is clear that Construction~\ref{cons:RMminmax} gives an FCC for the min-max function with redundancy $r_\mathrm{RM} = 2^m$. To see the importance of this construction, consider the example where $t$ is a power of $2$ and $w\leq \sqrt{8t}$. Then, one can use an $\mathrm{RM}(1,\log (4t))$ to obtain an FCC for the min-max function with redundancy equal to $4t$ which is asymptotically, for large $w$, only $3$ bits away from the lower bound of Lemma~\ref{lemma:redandancy_lower_bound_mm}.

\section{Real-Valued Functions}\label{sec:ml_functions}
\begin{figure*}[t!]
	\begin{minipage}{0.2\textwidth}
{
\begin{tikzpicture}
\begin{axis}[
	width = 4cm,
	height = 4cm,
    axis lines = left,
    xlabel = $x$,
    ylabel = {$g(x)$},
    title = {$\mathrm{Sigmoid}$\vphantom{()}},
    grid = major,
]
\addplot [
    domain=-10:10, 
    samples=100, 
    color=black,
]
{1/(1+e^(-x))};
\end{axis}
\end{tikzpicture}%
}%
\end{minipage}%
\begin{minipage}{0.2\textwidth}
\begin{tikzpicture}
\begin{axis}[
	width = 4cm,
	height = 4cm,
    axis lines = left,
    xlabel = $x$,
    title = {$\tanh(x)$},
    grid = major,
]
\addplot [
    domain=-10:10, 
    samples=100, 
    color=black,
    ]
    {(e^x - e^(-x))/(e^x+e^(-x))};
\end{axis}
\end{tikzpicture}%
\end{minipage}%
\begin{minipage}{0.2\textwidth}
\begin{tikzpicture}
\begin{axis}[
	width = 4cm,
	height = 4cm,
    axis lines = left,
    xlabel = $x$,
    title = {$\mathrm{ReLU}$\vphantom{()}},
    grid = major,
]
\addplot [
    domain=-10:10, 
    samples=100, 
    color=black,
    ]
    {max(0,x)};
\end{axis}
\end{tikzpicture}%
\end{minipage}%
\begin{minipage}{0.2\textwidth}
\begin{tikzpicture}
\begin{axis}[
	width = 4cm,
	height = 4cm,
    axis lines = left,
    xlabel = $x$,
    title = {$\mathrm{Der.~of~sigmoid}$\vphantom{()}},
    grid = major,
]
\addplot [
    domain=-10:10, 
    samples=100, 
    color=black,
    ]
    {e^x/(1+e^(x))^2};
\end{axis}
\end{tikzpicture}%
\end{minipage}%
\begin{minipage}{0.2\textwidth}
\begin{tikzpicture}
\begin{axis}[
	width = 4cm,
	height = 4cm,
    axis lines = left,
    xlabel = $x$,
    title = {$\mathrm{Der.~of}$ $\tanh(x)$},
    grid = major,
    ymax = 1,
]
\addplot [
    domain=-10:10, 
    samples=100, 
    color=black,
    ]
    {1 - tanh(x)^2};
\end{axis}
\end{tikzpicture}
\end{minipage}%
	\caption{Plots of the considered real-valued functions. From left to right we plot the sigmoid function, hyperbolic tangent function, the ReLU, the derivative of the sigmoid function and the derivative of the hyperbolic tangent. Observe that the first three functions are bijective on a certain interval of numbers and are approximately constant on one or two intervals. The last two functions are symmetric around $0$. For all $x\geq 0$, the last two functions are bijective on a certain interval and $0$ otherwise.}
	\label{fig:ml_functions}
\end{figure*}

In this section we apply our theoretical results on \fccs\ to a collection of real-valued functions that take a real number as input and output a real number, i.e., functions of the form $g:\mathbb{R}\to \mathbb{R}$. Throughout this work, however, we consider digital functions that take binary vectors as input and have an arbitrary output, i.e., we consider functions of the form $f: \mathbb{Z}_2^k \to \Im(f)$. To this end, let $\bintoreal:\Z_2^k \to \mathbb{R}$ be a mapping from the binary information vectors to real numbers. Thus, throughout this section considered functions are\footnote{From a practical point of view, we assume that the data is stored in binary vectors with a predetermined representation and a desired precision. However, during computations, the input is transformed to a real number and the function is computed over the reals with different representation and different precision.}
$$ \ML_g(\u) = g(\bintoreal(\u)), $$
where $g$ is one of the functions presented below. While our ideas apply to several binary representations, we opt to explain the results using a fixed precision quantization $\bintoreal$ as follows. Given a fixed precision $\epsilon>0$, the mapping $\bintoreal$ maps the binary vectors to intervals of size $\epsilon$, i.e.,
$$ \bintoreal(\u) = \epsilon  \left(\mathrm{bin2dec}(\u)-2^{k-1}+0.5 \right),$$
where $\mathrm{bin2dec}: \Z_2^k\to \{0,1,\dots2^k-1\}$ is the standard mapping from binary to decimal. Notice that this way, real values in the range of $\pm (2^{k-1}-0.5)\epsilon$ can be represented, which will be called \emph{quantization intervals} hereafter.

The functions we consider are shown in Fig.~\ref{fig:ml_functions} and are defined as follows:
\begin{itemize}
	\item Sigmoid or logistic function: $\sigma(x) = \frac{1}{1+e^{-x}}$ and its derivative $\frac{\partial\sigma(x)}{\partial x} = \frac{e^x}{(1+e^{x})^2}$.
    \item Rectified linear unit (ReLU) function: $\mathrm{ReLU}(x) = \max\{0,x\}$ and its derivative $\frac{\partial \mathrm{ReLU}(x)}{\partial x} = 0,$ if $x<0$ and $1$, if $x>0$.
    \item Hyperbolic tangent function: $\tanh(x) = \frac{e^x - e^{-x}}{e^x+e^{-x}}$ and its derivative $ \frac{\partial\tanh(x)}{\partial x}=1 - \tanh(x)^2.$
\end{itemize}

Those functions have practical importance as they are activation functions, and their derivatives, used in neural networks\footnote{In neural networks, the considered functions are multivariate, they take as input a model vector $\mathbf{a}$ and a weight vector $\mathbf{w}$. However, they only operate on the inner product $\mathbf{w}^T\mathbf{a}$. To that end, we express the inner product of those vectors by the scalar $x = \mathbf{w}^T\mathbf{a}$ and treat those functions as univariate.}.

Throughout this section we discuss \fccs~whose encoding is based on the function value only, as in Theorem \ref{thm:achievability}. Thus, the defining quantity of interest is the function distance matrix $\D_{\ML_g}(t,f_1,\dots,f_\I)$, which we abbreviate by $\D_{g}$.

We now study the three functions mentioned above. We delay the study of the derivatives of the sigmoid and $\tanh(x)$ function for only after Lemma~\ref{lem:ml_functions2}. These three functions can be divided into two classes: a class of functions that are bijective on a certain interval and constant (output equal to $0$) otherwise, such as the $\mathrm{ReLU}$ function; and a class of functions that are bijective on a certain interval and have approximately constant output for small and large values of $x$, such as the $\mathrm{sigmoid}$ and $\tanh$ functions. To see this division notice that numerically one can consider $\tanh(x) = 1$ for all $x\geq 6$ and $\tanh(x) = - 1$ for all $x\leq - 6$. Similarly $\sigma(x) = 1$ for $x\geq 10$ and $\sigma(x) = 0$ for $x\leq -10$. The ReLU function is a bijective function for all $x> 0$ and is $0$ otherwise (cf. Fig.~\ref{fig:ml_functions}).

Let $[a,b]\subset \mathbb{R}$ be the interval in which the function $g$ is bijective and assume for simplicity that $\epsilon$ divides $b-a$. For notational convenience, we denote the binary vector representing a certain quantization center $c$ by $\mathbf{w}_i$, $\mathbf{u}_i$ or $\mathbf{v}_i$ if $c<a$, $a\leq c\leq b$ or $c>b$, respectively. In addition we define $\displaystyle d(\u_i,\mathbf{v}) \triangleq \min_{\ell} d(\u_i,\mathbf{v}_\ell)$ to be the Hamming distance between the binary vector $\u_i$ representing a quantization center $c_1\in [a,b]$ and all binary vectors representing a quantization center $c_2<a$. We define $\displaystyle d(\u_i,\mathbf{v})$ and $d(\mathbf{v}, \mathbf{w})$ similarly.

We characterize the redundancy of an \fcc\ for the considered real-valued functions in Lemma~\ref{lem:ml_functions1} and Lemma~\ref{lem:ml_functions2}. %
Let $g_{00}:\mathbb{R}\to \mathbb{R}$ be a real-valued function that is bijective on an interval $[a,b] \subset \mathbb{R}$ and equal to $0$ on $\mathbb{R}\setminus[a,b]$. Fix an $\epsilon>0$, and
define the \emph{symmetric} square matrix $\D_{\ML\mathrm{00}}$ with $(b-a)/\epsilon + 1$ rows that has for $i\leq j$ the entries%
\begin{equation*}
    [\D_{\ML\mathrm{00}}]_{ij} = \left\{ \!\! \begin{array}{ll}
	0, & \!\! \text{if } i=j,\\
	2t\!+\!1\!-\!d(\u_i,\u_j),& \!\! \text{if } j \leq \frac{b-a}{\epsilon},\\
	2t\!+\!1\!-\! \min \{d(\u_i,\mathbf{v}),d(\u_i,\mathbf{w})\},& \!\! \text{otherwise.}
	\end{array} \right.
\end{equation*}

\begin{lemma}\label{lem:ml_functions1}
The redundancy of an \fcc\ for the function $\ML_{g_{00}}$ is bounded from above by
$$r_{g_{00}}(k,t) \leq N\left( \D_{\ML\mathrm{00}}\right).$$
\end{lemma}

If only one input vector evaluates to $0$, the upper bound becomes the optimal redundancy of an \fcc\ for this function. The same holds if several input vectors evaluate to $0$ and have similar distance profiles to each of the $\u_i$'s. This observation holds for the next lemma as well.

\begin{proof}

The proof follows from Theorem~\ref{thm:achievability} by designing the parities based on the function values. On a high level, since all input values in $\mathbb{R}\setminus [a,b]$ have the same output value, then the codewords of the form $(\mathbf{w}_i,\p)$ and $(\mathbf{v}_j,\p)$ are allowed to be confusable after $t$ errors and can thus have a distance less than $2t+1$. However, the codewords of the form $(\u_i,\p)$ cannot be confusable with any other codeword after $t$ errors. Therefore, for every $\u_i$ we search for the closest (in Hamming distance) $\mathbf{v}$ or $\mathbf{w}$ and design the parity vector of the $\mathbf{v}_j$'s and $\mathbf{w}_j$'s accordingly. The same is done for $\u_i$ and $\u_j$ for $i\neq j$.

Formally, let $\p_0,\dots,\p_{\frac{b-a}{\epsilon}}$ be the parity vectors used in the encoding such that $\Enc(\u_i) = (\u_i, \p_i)$ for $i= 1,\dots, \frac{b-a}{\epsilon}$, $\Enc(\mathbf{w}) = (\mathbf{w}, \p_0)$ and $\Enc(\mathbf{v}) = (\mathbf{v}, \p_0)$. It follows from Theorem~\ref{thm:achievability} that $r_{g_{00}}(k,t) \leq N\left(\D_{\ML\mathrm{00}}\right).$
\end{proof}

Let $g_{01}:\mathbb{R}\to \mathbb{R}$ be a real-valued function that is bijective on an interval $[a,b] \subset \mathbb{R}$ and satisfies $g_{01}(x) = 0$ for all $x< a$ and $g_{01}(x) = 1$ for all $x>b$. Fix a precision $\epsilon$ and, for ease of notation, define $\mathbf{v} \triangleq \u_{\frac{b-a}{\varepsilon}+1}$ and $\mathbf{w} \triangleq \u_{\frac{b-a}{\varepsilon}+2}$. Let the symmetric matrix $\D_{\ML\mathrm{01}}$ with $(b-a)/\epsilon + 2$ rows be defined as follows 
	$$[\D_{\ML\mathrm{01}}]_{ij} \! = \! \left\{ \!\! \begin{array}{ll}
	0, & \text{if } i=j,\\
	2t\!+\!1\!-\!d(\u_i,\u_j),& \text{otherwise.}
	\end{array} \right.$$
\begin{lemma}
\label{lem:ml_functions2}
The redundancy of an \fcc\ for the function $\ML_{g_{01}}(\u) = g_{01}(\bintoreal(\u))$ is bounded from above by
$$r_{g_{01}}(k,t) \leq N\left( \D_{\ML\mathrm{01}}\right).$$
\end{lemma}

The proof is omitted as it follows the same steps of the proof of Lemma~\ref{lem:ml_functions1} while taking care of not confusing any of the $\mathbf{v}_i$'s with any of the $\mathbf{w}_i$'s.

Now we study the derivative of $\sigma(x)$ and $\tanh(x)$. Both functions are symmetric around $0$ and are bijective on an interval $[0,a]$ and constant otherwise. %
Numerically, one could consider the derivative of $\sigma(x)$ to be equal to $0$ outside the interval $[-10,10]$ and the derivative of $\tanh(x)$ to be $0$ outside the interval $[-6,6]$.

For this set of functions we abuse notation and denote by $\u_i$ the binary representation of a quantization center $c \in [0,a]$ and by $-\u_i$ the binary representation of the quantization center $-c$. Similarly $\mathbf{v}_i$ is the binary representation of $c>a$ and $-\mathbf{v}_i$ is the binary representation of $-c<-a$. We define $d(\pm\u_i,\pm\u_j)\triangleq \min\{d(\u_i,\u_j),d(\u_i,-\u_j),d(-\u_i,\u_j),d(-\u_i,-\u_j)\}$ and define $d(\pm\u_i,\pm\mathbf{v})$ similarly. This notation makes the following definitions easier to present.

Let $g_\mathrm{sym}:\mathbb{R} \to \mathbb{R}$ be a real-valued function with $g(x)=g(-x)$, bijective on an interval $[0,a] \subset \mathbb{R}$ and is constant for $x>a$. Fix a precision $\epsilon$ and define the \emph{symmetric} square matrix $\D_{\ML\mathrm{-sym}}$ with $\frac{a}{\epsilon} + 1$ rows that has for $i\leq j$ the entries 
	$$ [\D_{\ML\mathrm{-sym}}]_{ij} \!=\! \left\{ \!\! \begin{array}{ll}
	0, & \text{if } i=j,\\
	2t\!+\!1\!-\!d(\pm\u_i,\pm\u_j),& \text{if } j \leq \frac{a}{\epsilon},\\
	2t\!+\!1\!-\! d(\pm\u_i,\pm\mathbf{v}),& \text{otherwise.}
	\end{array} \right.$$

\begin{lemma}\label{lem:ml_der_functions}
The redundancy of an \fcc\ for the function $\ML_{g_{\mathrm{sym}}}(\u) = g_{\mathrm{sym}}(\bintoreal(\u))$ is then bounded from above by
$$r_{g_\mathrm{sym}} \leq N\left(\D_{\ML\mathrm{-sym}}\right).$$
\end{lemma}
The proof is omitted as it follows the same steps of the proof of Lemma~\ref{lem:ml_functions1}.

\section{Conclusion}\label{sec:conclusion}
We introduced a new class of codes called function-correcting codes which encode a message to allow a successful recovery of a certain attribute or a function value of this message after transmission over an erroneous channel. This encoding potentially reduces the redundancy compared to error-correcting codes by leveraging the side information given to the receiver by the knowledge of the possibly erroneous original message and the desired function.

We considered an encoding setup in which the message itself is also transmitted and restricted our attention to substitution channels with at most $t$ errors. For this setting, we derived lower and upper bounds on the redundancy of \fccs\ by establishing a connection to irregular-distance codes. Further, we examined several functions of interest for which we derived explicit distance matrices, such that an irregular-distance code satisfying the distance matrix gives an optimal \fcc\ for the function at hand. Furthermore, we derived lower bounds and constructed \fccs\ for each specific function. Our constructions have optimal redundancy for the Hamming weight distribution functions. 
For the min-max function, we construct almost optimal codes. For the Hamming weight function there is still a gap of roughly $\frac23 t$ between the lower bound and the provided construction, leaving the problem of finding optimal \fccs\ open. For real-valued functions, a rigorous study of the distance profile of the input vectors is needed to understand the gap between the achievable redundancy and the lower bound. Further research directions on this topic include the study of \fccs\ for other functions of interest and under different channels.

\appendices
\section{Derivations of Redundancies in Table \ref{tab:redundancy}} \label{app:derivation:redundancies:table}
We start by deriving the redundancy obtained by employing a standard  error-correcting onto the data, labeled as the column ``\emph{ECC on Data}'' in Table \ref{tab:redundancy}. This means, the data vector $\u$ is encoded with a systematic code of dimension $k$ and minimum distance $2t+1$. The redundancy part $\p$ of this systematic code is then appended to $\u$, resulting in $(\u,\p)$. Clearly, with such a construction it is possible to reconstruct $\u$ at the receiver and thus $f(\u)$. It is known \cite[Ch. 5.5]{roth_introduction_2006} that there exists a binary alternant code of length $n$, minimum distance $2t+1$ and redundancy at most $r \leq t\lceil\log n \rceil$. Since $n = k+r$,
\begin{align*}
	r &\leq t\lceil\log (k+r) \rceil \leq t \log(k+r) +t \\
	&= t \log k +t\log (1+r/k) +t \\
	& \leq t \log k + t \frac{r \log \mathrm{e}}{k} + t.
\end{align*}
It follows that
$$ r \leq \frac{t\log k+t}{ (1-t/k\log \mathrm{e})} $$
and thus, for large $k$ and fixed $t$, the dominant term is $t \log k$.

We now turn to derive the redundancy obtained by a direct approach of encoding the function values, which corresponds to the column ``\emph{ECC on Function Values}'' in Table~\ref{tab:redundancy}. More precisely, we encode the function value $f(\u)$ with a (possibly non-systematic) code of cardinality $E$ (recall that $E$ is the size of the image of $f$) and minimum distance $2t+1$. The resulting \emph{codeword} $\c$ is then appended to $\u$, resulting in $(\u,\c)$. Also in this case, it is possible to retrieve $f(\u)$ by decoding the function value from the received word corresponding to $\c$ and simply ignoring the information part $\u$. In this case, the redundancy of our construction is given by the length of the employed code. Using alternant codes, we obtain for the redundancy of the alternant code
\begin{align*}
	r_{\mathrm{alt}} &\leq  t \lceil\log (\log \lceil|E|\rceil + r_{\mathrm{alt}})\rceil \\
	&\leq t \log\log|E| +t + t(1+r_{\mathrm{alt}})\log \mathrm{e} /\log |E|
\end{align*}
and thus 
$$ r_{\mathrm{alt}} \leq \frac{t \log \log |E|+t(1+\log \mathrm{e})}{1-t/\log |E| \log \mathrm{e}}. $$
Consequently, the redundancy of the direct approach is given by the length of the alternating code $r = \lceil\log|E|\rceil+r_{\mathrm{alt}}$. For sufficiently large $|E|$ and fixed $t$ this is adequately approximated by $\log |E| + t\log \log |E|$.

\section{Proof of Lemma~\ref{lemma:regular:distance:code}}\label{app:proof_lemma4}
\begin{proof}[Proof of Lemma~\ref{lemma:regular:distance:code}]
	Lemma \ref{lemma:irregular:distance:code:gv} states that there exists a code of cardinality $M$, minimum distance $D$ and length $r$, if $2^r > MV(r,D-1)$. For $D-1 \leq r/2$, we can use \cite[Lemma 4.7.2]{ash_information_1990} to bound the size of the Hamming ball to $V(r,D-1) \leq 2^r \mathrm{e}^{-2r(\frac12-\frac{D-1}{r})^2}$. Combining these two results, we obtain that if $2^r> M 2^r \mathrm{e}^{-2r(\frac12-\frac{D-1}{r})^2}$, then there exists an $[M,D]$ code of length $r$. Setting $D = r/2-\epsilon r$ for some $0<\epsilon\leq\frac12$, we can deduce that there exists an $[M,D]$ code of length $r$ satisfying $M \leq \mathrm{e}^{2r\epsilon^2}$. Choosing $\epsilon = \sqrt{\ln(r)/r}$, we obtain that $r = 2D/(1-2\sqrt{\ln(r)/r})$. Here we require $r\geq 10$ such that $\epsilon\leq\frac12$. We can then use that $\ln(D)/D \geq \ln(r)/r$ for $r\geq D \geq 3$ and we obtain the lemma's statement.
\end{proof}

\section{Proof of Claim~\ref{claim:distance_ub}}\label{app:proof_claim1}
\begin{proof}[Proof of Claim~\ref{claim:distance_ub}]
	We give a proof for $\ell = 3$. For $\ell >3$, we can restrict all the bits of all $\u^{(v)}$, $v\in[w]$ to be $0$ except for the three least significant bits and apply the same proof of $\ell = 3$.
	We show that for all $i,j,i',j'\in [w]$, $(i,j)\neq(i',j')$ there exist two information words $\u,\u'$ such that $\minmax_w(\u) = (i,j)$ and $ \minmax_w(\u')= (i',j')$, where $d(\u,\u') = 2$. We split the proof into the following three cases.
	
	\begin{itemize}
		\item {\em $i'\leq i$: }To change $\u$ into $\u'$ satisfying $\minmax_w(\u)=(i,j)$ and $\minmax_w(\u')=(i',j')$, consider $\u$ to be of the form%
		\begin{equation*}\label{eq:ipli}
			\u = (001,\dots,\underbrace{000}_{\u^{(i)}}, 001,\dots,\underbrace{010}_{\u^{(j)}}, 001,\dots,001).
		\end{equation*}
		Note that $\minmax_w(\u)=(i,j)$ by definition of $\minmax_w$. We can change $\u$ to $\u'$ as follows. First, if $i'<i$ flip the third bit of $\u^{(i')}$, (so that $\uii{i'} = (000)$) to change the function value to $(i',j)$. To change $j$ to $j'$, it is sufficient to flip the first bit of $\u^{(j')}$. Thus, $d_{\minmax_w}((i,j),(i',j'))\leq 2$ because we could edit $\u$ with $\minmax_w(\u) = (i,j)$ to $\u'$ with $\minmax_w(\u') = (i',j')$ using only two substitutions.
		
		\item {\em $i'>i$, $i'\neq j$: }Consider $\u$ to be of the form%
		\begin{equation*}
			\u = (001,\dots,\underbrace{000}_{\u^{(i)}}, 001,\dots,\underbrace{000}_{\uii{i'}},\underbrace{010}_{\u^{(j)}}, 001,\dots,001).
		\end{equation*}
		Note that $\minmax_w(\u)=(i,j)$ by definition of $\minmax_w$. We can change $\u$ to $\u'$ as follows. First flip the third bit of $\u^{(i)}$, (so that $\ui = (001)$) to change the function value to $(i',j)$. To change $j$ to $j'$, it is sufficient to flip the first bit of $\u^{(j')}$.
		
		\item {\em $i'>i$ and $i'=j$: }Consider $\u$ to be of the form%
		\begin{equation}\label{eq:ipej}
			\u = (010,\dots,\underbrace{001}_{\u^{(i)}}, 010,\dots,\underbrace{100}_{\u^{(j)}}, 010,\dots,010).
		\end{equation}
		Note that $\minmax_w(\u)=(i,j)$ by definition of $\minmax_w$. We can change $\u$ to $\u'$ as follows. Flip the first bit of $\u^{(j)}$, (so that $\uj = (000)$) to change the function value to $(j,1)=(i',1)$ (or $(j,2)$, if $j=1$). To obtain $j'$ as the maximum, it is sufficient to flip the first bit of $\u^{(j')}$.
	\end{itemize}
\end{proof}

\section{Proof of Claim~\ref{claim:distance_profile}}\label{app:proof_claim2}
\begin{proof}[Proof of Claim~\ref{claim:distance_profile}]
	We give a proof for $\ell=3$. For $\ell >3$, we can restrict all the bits of all $\u^{(v)}$, $v \in [w]$ to be $0$ except for the three least significant bits and apply the same proof of $\ell = 3$. %
	Fix $f_1 \triangleq (i,j)$ and consider all information words $\u$ such that $\minmax_w(\u) = f_1$. Note that for any $\u$, the $\uii{v}$'s form a totally ordered set and therefore can be arranged in a chain, as illustrated in Fig. \ref{fig:proof:claim}.
	By definition, for any $f_2$ with $d_{\minmax_w}(f_1,f_2) = 1$, there exists a $\u$ with $\minmax_w(\u)=f_1$, such that, by flipping one bit in $\u$, the function value changes from $f_1$ to $f_2$. We find all possible function values that can be obtained after a single bit flip in some $\u$ with $\minmax_w(\u)=(i,j$). We distinguish between the following types of edit operations.
	\begin{enumerate}
		\item Change one bit in $\ui$. First, change $\ui$ such that the result becomes larger then $\ui$, but smaller than $\uj$. This way it is only possible to change the function value to $(v,j)$, for an arbitrary $v \in [w] \setminus \{i,j\}$. This can in fact be achieved by choosing $\u$ to be
		$$ \u = (011,\dots,\underbrace{001}_{\u^{(i)}}, \underbrace{010}_{\uv},011,\dots,\underbrace{111}_{\u^{(j)}}, 011,\dots,011),$$
		and flipping the first bit of $\ui$ (so that $\ui = (101)$). Note that $\minmax_w(\u) = (i,j)$.
		
		Second, change $\ui$ such that it becomes larger than $\uj$. This way, it is only possible to change the function value to $(v,i)$, $v \in [w] \setminus\{i,j\}$. This can be achieved by choosing $\u$ to be
		$$ \u = (011,\dots,\underbrace{001}_{\u^{(i)}}, \underbrace{010}_{\uv},011,\dots,\underbrace{100}_{\u^{(j)}}, 011,\dots,011)$$
		and flipping the first bit of $\ui$ (so that $\ui = (101)$). For an illustration, see Fig.~\ref{fig:proof:claim}. 
		
		\item Change one bit in $\uj$. First, we change $\uj$ such that the result becomes smaller then $\uj$, but larger than $\ui$. This way it is only possible to change the function value to $(i,v)$, for an arbitrary $v \in [w] \setminus \{i,j\}$. This can in fact be achieved by choosing $\u$ to be
		$$ \u = (100,\dots,\underbrace{000}_{\u^{(i)}}, \underbrace{101}_{\uv},100,\dots,\underbrace{110}_{\u^{(j)}}, 100,\dots,100)$$
		and flipping the first bit of $\uj$ (so that $\uj = (010)$). Note that $\minmax_w(\u) = (i,j)$.
		
		Second, we change $\uj$ such that it becomes smaller than $\ui$. This way, is is only possible to change the function value to $(j,v)$, $v \in [w] \setminus\{i,j\}$. This can be achieved by choosing $\u$ to be
		$$ \u = (100,\dots,\underbrace{011}_{\u^{(i)}}, \underbrace{101}_{\uv},100,\dots,\underbrace{110}_{\u^{(j)}}, 100,\dots,100)$$
		and flipping the first bit of $\uj$ (so that $\uj = (010)$).
		\item Change one bit in $\uii{v}$, $v \in [w] \setminus\{i,j\}$. This does not yield any additional function values that can be reached, since it is only possible to obtain $(v,j)$ or $(i,v)$.
	\end{enumerate}
	\begin{figure}
		\centering
		\begin{tikzpicture}
			\node (cap) {1) $(i,j) \rightarrow (v,j)$};
			\node[right= 1cm of cap] (order) {$\ui < \uii{v}<\dots<\uj$};
			\draw[->] ($(order.south) + (-1.5,0)$) to [bend right=35] (order.south);
			
			\node[below=0.2cm of cap] (cap2) {\hphantom{2) }$(i,j) \rightarrow (v,i)$};
			\node[right= 1cm of cap2] (order) {$\ui < \uii{v}<\dots<\uj$};
			\draw[->] ($(order.south) + (-1.5,0)$) to [bend right=20] ($(order.south) + (1.75,0)$);
			
			\node[below=0.4cm of cap2] (cap3) {2) $(i,j) \rightarrow (i,v)$};
			\node[right= 1cm of cap3] (order) {$\ui <\dots< \uii{v}<\uj$};
			\draw[->] ($(order.south) + (1.5,0)$) to [bend left=35] (order.south);
			
			\node[below=0.2cm of cap3] (cap4) {\hphantom{2) }$(i,j) \rightarrow (j,v)$};
			\node[right= 1cm of cap4] (order) {$\ui <\dots< \uii{v}<\uj$};
			\draw[->, bend right] ($(order.south) + (1.5,0)$) to [bend left=20] ($(order.south) + (-1.75,0)$);
		\end{tikzpicture}
		\caption{Illustration of the different editing operations in the proof of Claim \ref{claim:distance_profile}.}
		\label{fig:proof:claim}
	\end{figure}
	Since the resulting function values in cases 1) and 2) are distinct, for each $f_1$, there exist $4(w-2)$ values $f_2$ with $f_1\neq f_2$ and $d_{\minmax_w}(f_1,f_2)$.
	Using further, that there are $w(w-1)$ function values, the total number of entries in $\D_{\minmax}$ that are equal to $2t$ is equal to
	$4w(w-1)(w-2)$.
\end{proof}

\bibliographystyle{ieeetr}
\bibliography{IEEEabrv,FunctionalErrorCorrection}

\begin{thebibliography}{10}

\bibitem{lenz_function-correcting_2021}
A.~Lenz, R.~Bitar, A.~{Wachter-Zeh}, and E.~Yaakobi, ``Function-{{Correcting
  Codes}},'' in {\em Proc. {{Int}}. {{Symp}}. {{Inf}}. {{Theory}}},
  ({Melbourne, Australia}), pp.~1290--1295, July 2021.

\bibitem{masnick_linear_1967}
B.~Masnick and J.~Wolf, ``On linear unequal error protection codes,'' {\em IEEE
  Transactions on Information Theory}, vol.~13, pp.~600--607, Oct. 1967.

\bibitem{boyarinov_linear_1981}
I.~Boyarinov and G.~Katsman, ``Linear unequal error protection codes,'' {\em
  IEEE Transactions on Information Theory}, vol.~27, pp.~168--175, Mar. 1981.

\bibitem{schoeny_context-aware_2019}
C.~Schoeny, F.~Sala, M.~Gottscho, I.~Alam, P.~Gupta, and L.~Dolecek,
  ``Context-{{Aware Resiliency}}: {{Unequal Message Protection}} for
  {{Random-Access Memories}},'' {\em IEEE Transactions on Information Theory},
  vol.~65, pp.~6146--6159, Oct. 2019.

\bibitem{borade_unequal_2009}
S.~Borade, B.~Nakiboglu, and L.~Zheng, ``Unequal {{Error Protection}}: {{An
  Information-Theoretic Perspective}},'' {\em IEEE Transactions on Information
  Theory}, vol.~55, pp.~5511--5539, Dec. 2009.

\bibitem{ahlswede_get_1981}
R.~Ahlswede and I.~Csiszar, ``To get a bit of information may be as hard as to
  get full information,'' {\em IEEE Transactions on Information Theory},
  vol.~27, pp.~398--408, July 1981.

\bibitem{orlitsky_coding_2001}
A.~Orlitsky and J.~Roche, ``Coding for computing,'' {\em IEEE Transactions on
  Information Theory}, vol.~47, pp.~903--917, Mar. 2001.

\bibitem{kuzuoka_distributed_2016}
S.~Kuzuoka and S.~Watanabe, ``On distributed computing for functions with
  certain structures,'' in {\em Proc. {{Inf}}. {{Theory Workshop}}},
  ({Cambridge, United Kingdom}), pp.~6--10, {IEEE}, Sept. 2016.

\bibitem{witsenhausen_zero-error_1976}
H.~Witsenhausen, ``The zero-error side information problem and chromatic
  numbers,'' {\em IEEE Transactions on Information Theory}, vol.~22,
  pp.~592--593, Sept. 1976.

\bibitem{wang_when_2019}
X.~Wang, A.~J. Budkuley, A.~Bogdanov, and S.~Jaggi, ``When are large codes
  possible for {{AVCs}}?,'' in {\em Proc. {{Int}}. {{Symp}}. {{Inf}}.
  {{Theory}}}, ({Paris, France}), pp.~632--636, {IEEE}, July 2019.

\bibitem{mazooji_robust_2016}
K.~Mazooji, F.~Sala, G.~{Van den Broeck}, and L.~Dolecek, ``Robust channel
  coding strategies for machine learning data,'' in {\em Proc. {{Allerton
  Conf}}. {{Comm}}., {{Control}}, {{Comp}}.}, ({Monticello, IL, USA}),
  pp.~609--616, {IEEE}, Sept. 2016.

\bibitem{kabir_coded_2017}
S.~Kabir, F.~Sala, G.~{Van den Broeck}, and L.~Dolecek, ``Coded machine
  learning: {{Joint}} informed replication and learning for linear
  regression,'' in {\em Proc. {{Allerton Conf}}. {{Comm}}., {{Control}},
  {{Comp}}.}, ({Monticello, IL, USA}), pp.~1248--1255, {IEEE}, Oct. 2017.

\bibitem{huang_functional_2020-1}
K.~Huang, P.~H. Siegel, and A.~Jiang, ``Functional {{Error Correction}} for
  {{Robust Neural Networks}},'' {\em IEEE Journal on Selected Areas in
  Information Theory}, vol.~1, pp.~267--276, May 2020.

\bibitem{raviv_enhancing_2021}
N.~Raviv, A.~Kelley, M.~Guo, and Y.~Vorobeychik, ``Enhancing {{Robustness}} of
  {{Neural Networks}} through {{Fourier Stabilization}},'' in {\em Proc.
  {{Int}}. {{Conf}}. {{Machine Learning}}}, p.~10, July 2021.

\bibitem{roth_fault-tolerant_2019}
R.~M. Roth, ``Fault-{{Tolerant Dot-Product Engines}},'' {\em IEEE Transactions
  on Information Theory}, vol.~65, pp.~2046--2057, Apr. 2019.

\bibitem{roth_analog_2020}
R.~M. Roth, ``Analog {{Error-Correcting Codes}},'' {\em IEEE Transactions on
  Information Theory}, vol.~66, pp.~4075--4088, July 2020.

\bibitem{dupraz_noisy_2020}
E.~Dupraz and L.~R. Varshney, ``Noisy {{In-Memory Recursive Computation}} with
  {{Memristor Crossbars}},'' in {\em Proc. {{Int}}. {{Symp}}. {{Inf}}.
  {{Theory}}}, ({Los Angeles, CA, USA}), pp.~804--809, {IEEE}, June 2020.

\bibitem{lint_introduction_1999}
J.~H. van Lint, {\em Introduction to {{Coding Theory}}}.
\newblock Springer Berlin, Heidelberg, 1999.

\bibitem{gu_generalized_1993}
J.~Gu and T.~Fuja, ``A generalized {{Gilbert-Varshamov}} bound derived via
  analysis of a code-search algorithm,'' {\em IEEE Transactions on Information
  Theory}, vol.~39, pp.~1089--1093, May 1993.

\bibitem{tolhuizen_generalized_1997}
L.~M. G.~M. Tolhuizen, ``The generalized {{Gilbert-Varshamov}} bound is implied
  by {{Turan}}'s theorem,'' {\em IEEE Transactions on Information Theory},
  vol.~43, pp.~1605--1606, Sept. 1997.

\bibitem{miller_reducibility_1972}
R.~M. Karp, ``Reducibility among combinatorial problems,'' in {\em Complexity
  of Computer Computations} (R.~E. Miller, J.~W. Thatcher, and J.~D. Bohlinger,
  eds.), pp.~85--103, Springer {US}, 1972.

\bibitem{plotkin_binary_1960}
M.~Plotkin, ``Binary codes with specified minimum distance,'' {\em IEEE
  Transactions on Information Theory}, vol.~6, pp.~445--450, Sept. 1960.

\bibitem{gilbert_comparison_1952}
E.~N. Gilbert, ``A comparison of signalling alphabets,'' {\em Bell System
  Technical Journal}, vol.~31, pp.~504--522, May 1952.

\bibitem{varshamov_estimate_1957}
R.~Varshamov, ``Estimate of the number of signals in error correcting codes,''
  {\em Dokl. Akad. Nauk SSSR}, vol.~117, pp.~739--741, 1957.

\bibitem{macwilliams_theory_2007}
F.~J. MacWilliams and N.~J.~A. Sloane, {\em The Theory of Error Correcting
  Codes}.
\newblock North {{Holland}} Mathematical Library, {Amsterdam}: {Elsevier},
  2007.

\bibitem{horadam_hadamard_2007}
K.~J. Horadam, {\em Hadamard {{Matrices}} and {{Their Applications}}:}.
\newblock {Princeton}: {Princeton University Press}, Dec. 2007.

\bibitem{lin_weak_2018-1}
H.-Y. Lin, S.~M. Moser, and P.-N. Chen, ``Weak {{Flip Codes}} and their
  {{Optimality}} on the {{Binary Erasure Channel}},'' {\em IEEE Transactions on
  Information Theory}, vol.~64, pp.~5191--5218, July 2018.

\bibitem{ash_information_1990}
R.~B. Ash, {\em Information {{Theory}}}.
\newblock {New York}: {Dover Publications}, 1990.

\bibitem{roth_introduction_2006}
R.~Roth, {\em Introduction to {{Coding Theory}}}.
\newblock {Cambridge}: {Cambridge University Press}, 2006.

\end{thebibliography}

\end{document}